%% file: main.tex
\documentclass[sigconf, nonacm, twocolumn]{acmart}
\newcommand\vldbauthors{\authors}
\newcommand\vldbtitle{\shorttitle} 
\newcommand\vldbpagestyle{plain} 
\renewcommand\footnotetextcopyrightpermission[1]{}
\usepackage{tabularx}
\usepackage{array}
\usepackage{booktabs}
\usepackage{amsmath} 
\usepackage{enumitem}
\usepackage{subcaption}
\usepackage{multirow}   
\usepackage{xcolor}
\usepackage{balance}
\usepackage{mathtools}
\usepackage{pifont}
\usepackage{setspace}
\usepackage[linesnumbered,ruled,vlined]{algorithm2e}
\newcommand{\revision}[1]{\textcolor{black}{#1}}
\newcommand{\yin}[1]{\textcolor{black}{#1}}
\newcommand{\remove}[1]{\textcolor{black}{#1}}
\usepackage{pdfpages}

\newcommand{\bbc}{{BBC}\xspace}
\newtheorem{observation}{Observation}

\newcommand\vldbdoi{XX.XX/XXX.XX}
\newcommand\vldbpages{XXX-XXX}
\newcommand\vldbvolume{14}
\newcommand\vldbissue{1}
\newcommand\vldbyear{2026}

\begin{document}

\title{BBC: Improving Large-$k$ Approximate Nearest Neighbor Search with a Bucket-based Result Collector}

\author{Ziqi Yin$^1$, Gao Cong$^1$, Kai Zeng$^2$, Jinwei Zhu$^2$, Bin Cui$^3$}
\affiliation{
\begin{center}
    {$^1$Nanyang Technological University, Singapore} \quad
    {$^2$Huawei Technologies Co., Ltd} \quad
    {$^3$Peking University, China} ~~~~~
\end{center}
}
\email{ziqi003@e.ntu.edu.sg,  gaocong@ntu.edu.sg, {zhujinwei, kai.zeng}@huawei.com, bin.cui@pku.edu.cn}
\input{abstract}
\maketitle

\vspace*{-1em}
\pagestyle{\vldbpagestyle}
\begingroup\small\noindent\raggedright\textbf{PVLDB Reference Format:}\\
\revision{\vldbauthors. \vldbtitle. PVLDB, \vldbvolume(\vldbissue): \vldbpages, \vldbyear.}\\
\revision{\href{https://doi.org/\vldbdoi}{doi:\vldbdoi}}
\endgroup
\begingroup

\vspace*{-1em}
\renewcommand\thefootnote{}\footnote{\noindent
\revision{This work is licensed under the Creative Commons BY-NC-ND 4.0 International License. Visit \url{https://creativecommons.org/licenses/by-nc-nd/4.0/} to view a copy of this license. For any use beyond those covered by this license, obtain permission by emailing \href{mailto:info@vldb.org}{info@vldb.org}. Copyright is held by the owner/author(s). Publication rights licensed to the VLDB Endowment. \\
\raggedright Proceedings of the VLDB Endowment, Vol. \vldbvolume, No. \vldbissue\ %
ISSN 2150-8097. \\
\href{https://doi.org/\vldbdoi}{doi:\vldbdoi} \\}
}\addtocounter{footnote}{-1}\endgroup

\vspace*{-1em}
\ifdefempty{\vldbavailabilityurl}{}{
\vspace{.3cm}
\begingroup\small\noindent\raggedright\textbf{PVLDB Artifact Availability:}\\
\revision{The source code, data, and/or other artifacts have been made available at \url{https://github.com/Heisenberg-Yin/BBC}.}
\endgroup
}

\input{intro}

\input{problem}
\input{framework}

\input{exp}

\input{related}

\input{conclusion}

\newpage
\bibliographystyle{ACM-Reference-Format}
\bibliography{ref}
\input{appendix}

\end{document}

%% file: abstract.tex
\begin{abstract}
Although Approximate 
Nearest Neighbor (ANN) search has been extensively studied, large-$k$ 
ANN queries that aim to retrieve a large number of nearest neighbors remain underexplored, despite their numerous real-world applications. Existing 
ANN methods face significant performance degradation for such queries. In this work, we first investigate the reasons for the performance degradation of quantization-based ANN indexes: (1) the inefficiency of existing top-$k$ collectors, which incurs significant overhead in candidate maintenance, and (2) the reduced pruning effectiveness of quantization methods, which leads to a costly re-ranking process.
To address this, we propose a novel bucket-based result collector {(\bbc)} 
to enhance the efficiency of existing quantization-based ANN indexes for large-$k$ ANN queries. {\bbc} introduces two key components: (1) a bucket-based 
result buffer that 
organizes candidates into buckets by their distances to the query.
This design reduces ranking costs and improves cache efficiency,
enabling high-performance maintenance of a candidate superset and a lightweight final selection of top-$k$ results. 
%
(2) two re-ranking algorithms tailored for different types of quantization methods, which accelerate their re-ranking process by reducing either the number of candidate objects to be re-ranked or cache misses. Extensive experiments on real-world datasets demonstrate that {\bbc} accelerates existing quantization-based ANN methods by up to 3.8$\times$ at recall@$k$ = 0.95 for large-$k$ ANN queries.
\end{abstract}      


%% file: intro.tex
\section{Introduction}
\label{sec:intro}



Driven by {the rapid process of} large-scale machine learning and generative AI techniques, efficient vector search has become a critical capability in modern data systems~\cite{patel2025semantic, yang2020pase, mohoney2023high,pound2025micronn}. 
Vector databases~\cite{wang2024vector, milvus2021, pan2024vector, DBLP:journals/vldb/PanWL24} now serve as the foundation for querying embeddings 
generated by deep {learning} models, where Approximate 
Nearest Neighbor (ANN) search is the core computational primitive~\cite{wang2021comprehensive, DBLP:journals/vldb/PanWL24}, trading off minor accuracy for significantly improved efficiency~\cite{curse1,curse2}. 
In practice, ANN algorithms are typically extended to 
retrieve approximate $k$-nearest neighbors to meet the demands of real-world applications.

Although ANN queries have been extensively studied, most existing studies design and evaluate their methods under small $k$ settings, 
which are typically in the range of a few tens to a few  hundreds~\cite{liApproximateNearestNeighbor2020,dobson2023scaling}.
{This setting is well-suited for some applications such as retrieval-augmented generation (RAG)~\cite{lewis2020retrieval}, 
where an ANN index retrieves the top-10 relevant documents for a large language model to generate the final {responses}. 
However, many real-world applications 
involve large $k$ scenarios (e.g., $k \geq 5{,}000$), where a large number of nearest neighbors need to be retrieved for each query. 
We refer to such queries as large-$k$ ANN queries and next present several of their applications.
%
\begin{enumerate}[leftmargin=*,topsep=0pt]
    \item 
    In model training or fine-tuning scenarios, it is often necessary to efficiently retrieve a large set of highly relevant data samples to construct the training dataset. 
These samples typically number in the tens of thousands in many real-world applications, such as retrieving videos or images that capture specific types of 
dangerous driving behavior. In such cases, 
the initial query is often ambiguous, such as an image representing a driving behavior. 
Data engineers usually need to perform multiple iterations of search, refining the query by selecting better examples from the retrieved results 
before identifying an effective query and obtaining a satisfactory set of results.
    \item 
    In document retrieval, state-of-the-art methods often adopts a retrieve-and-rerank pipeline~\cite{lee2023rethinking}. Documents are encoded into embeddings using pre-trained language models~\cite{ANNForDPR}. An ANN index built on these embeddings retrieves tens of thousands of candidate documents for each query. Subsequently, a more sophisticated model, such as ColBERT~\cite{khattab2020colbert}, 
    which encodes queries and documents into token embeddings and computes similarity by aggregating token-level similarities, re-ranks the candidates to obtain the final results.

    \item 
    In industrial 
recommendation systems~\cite{DBLP:conf/www/KhandagaleJASYW25,gao2021learning}, 
hundreds of thousands of candidates are first retrieved via an ANN index and then re-ranked using more computationally expensive 
models to produce the final recommendations. 
\end{enumerate}
However, existing ANN methods face significant performance degradation when handling large-$k$ ANN queries, as 
demonstrated in our evaluation on four representative ANN indexes: the IVF~\cite{johnsonBillionscaleSimilaritySearch2019}, the popular graph-based method HNSW~\cite{malkovEfficientRobustApproximate2020}, and two quantization-based methods IVF+RaBitQ~\cite{gao2024rabitq} and IVF+PQ~\cite{jegouProductQuantizationNearest2011}. 
An example result on the {C4} dataset is shown in Figure~\ref{fig:qs-exp-motivation} and similar trends are observed across other datasets. 
We observe that at recall$@k$ = 0.95, when $k$ increases from 100 to 5,000, the throughput of IVF+RaBitQ
drops from 227 queries per second (QPS) 
to 47 QPS, 
a 4.8 $\times$ slowdown; HNSW's throughput falls from 113 QPS 
to 20 QPS, 
showing a larger
5.7$\times$ slowdown. 
In this work, we focus on optimizing quantization-based methods for large-$k$ ANN queries for two reasons: 1) quantization-based methods exhibit superior performance 
for large-$k$ ANN queries; and 2) our empirical results and  analysis\footnote{The suboptimal performance of HNSW at large $k$ arises from the fact that graph-based ANN indexes are designed for small $k$. These methods construct a proximity graph during indexing and use it to navigate queries toward nearby objects to reduce search space. However, when $k$ is large, the graph traversal inevitably expands to a larger portion of the graph, incurring significant additional overhead, as shown in Figure~\ref{fig:qs-exp-time-portion}.} show that quantization-based methods are more robust to increase of $k$ compared with graph-based methods such as HNSW. 

\begin{figure}[!t]
\centering
\includegraphics[width=0.75\columnwidth]{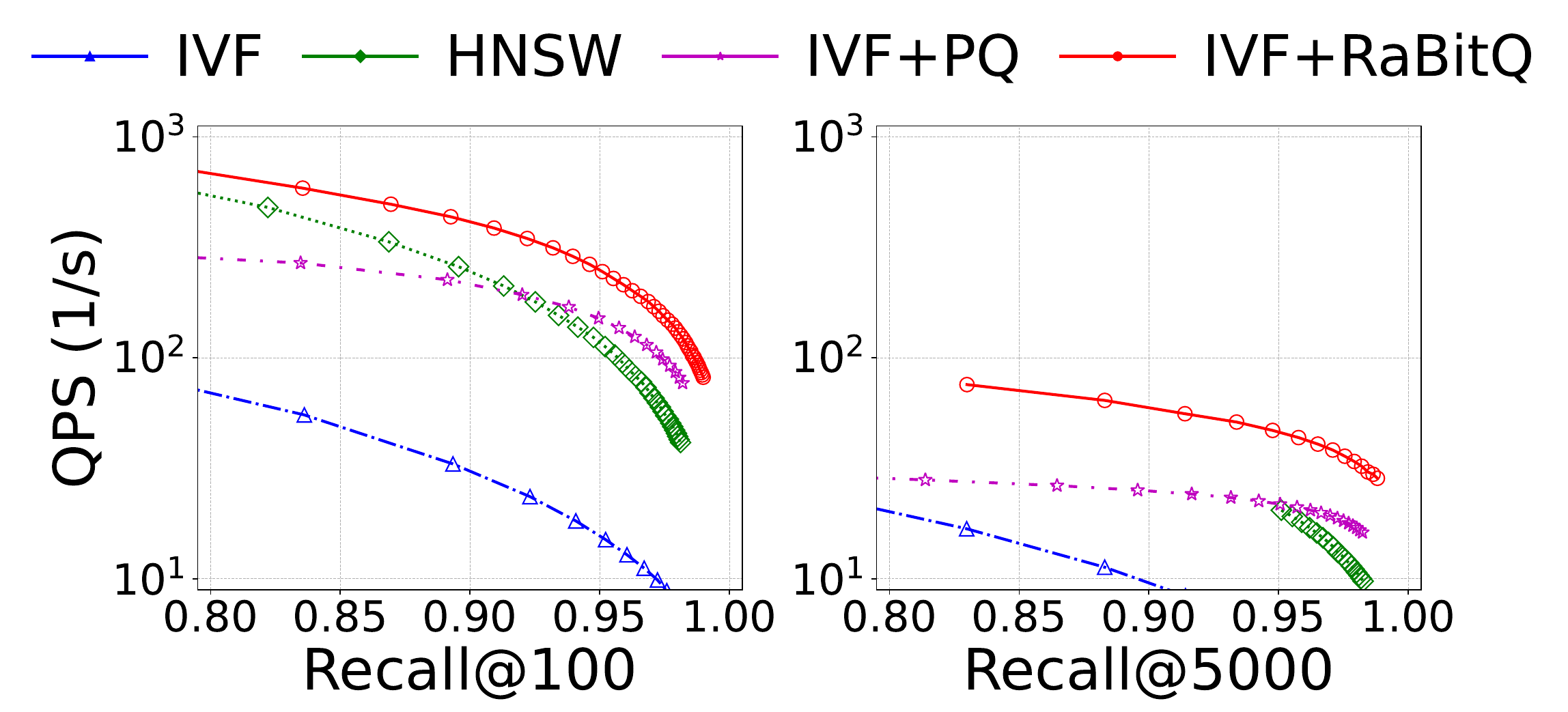}
\vspace*{-0.5em}
\caption{Querying Performance of {IVF}, {HNSW}, {IVF+PQ}, and {IVF+RaBitQ} on the C4 dataset at $k$ = 100 and $k$ = 5000.}
\vspace*{-1.5em}
\label{fig:qs-exp-motivation}
\end{figure}

\noindent
{\bf Challenge 1.}
Quantization-based methods often face significant slowdowns in large-$k$ ANN queries due to two primary challenges. 
The first stems from the inefficiency of existing top-$k$ collectors when handling large $k$.
%
These collectors are responsible for maintaining the $k$ nearest candidates by storing each candidate's ID and distance to the query. {Whenever a closer candidate is found, it replaces the farthest one, whose distance serves as the comparison threshold.} 
Existing ANN studies commonly employ a binary-heap priority queue as the top-$k$ collector. However, when $k$ is large, the heap becomes inefficient under modern memory hierarchies consisting of L1, L2, L3 caches and main memory. 
The inefficiency arises because 
the heap size exceeds the capacity of the fastest L1 cache. 
For example, when $k$ = 100, the distance–ID pairs 
occupy only 
800 bytes, which is well within the 32 KB L1 cache capacity, allowing the heap reside entirely in the L1 cache 
and achieve low latency. However, when $k$ = 5,000, the heap grows to $40{,}000$ bytes, exceeding the L1 cache capability and causing frequent L1 cache misses that significantly increase latency. 
For example, in {IVF+RaBitQ}, its share of runtime rises from 2\% at $k$ = 100 to 23\% at $k$ = 5{,}000 under recall@$k$ = 0.95, as shown in Figure~\ref{fig:qs-exp-time-portion}. This observation is consistent with prior findings~\cite{larkin2014back}
that link priority queue performance to L1 cache misses.
\noindent
{\bf Challenge 2.}
{As $k$ increases, the pruning effectiveness 
of quantization methods drops.
These methods accelerate ANN search by estimating distances to quickly prune distant objects and can be grouped into two categories: 
(1) 
\textit{unbounded} methods that prune solely by estimated distances (e.g., PQ),
and (2) {\textit{bounded}} methods that provide 
{probabilistically guaranteed distance bounds} 
(e.g., RaBitQ).
Although they differ in querying strategies, both types of approaches require re-ranking a growing number of candidates as $k$ increases.
In \textit{bounded} methods, the candidate set is maintained by a top-$k$ collector, and any object whose estimated bounds overlap with the current threshold 
is re-ranked. As $k$ grows, the number of such objects increases, leading to higher re-ranking overhead.
For example, in {IVF+RaBitQ}, the runtime share of exact distance computation rises from 27\% at $k$ = 100 to 40\% at k = 5,000, as shown in Figure~\ref{fig:qs-exp-time-portion}. 
%
Similarly, {\textit{unbounded}} methods retrieve and re-rank a candidate set whose size is typically several times larger than $k$, to maintain high recall. 
%
Consequently, as $k$ increases, the re-ranking cost grows roughly linearly, resulting in a significant slowdown.

\begin{figure}[!t]
\centering
\includegraphics[width=0.75\columnwidth]{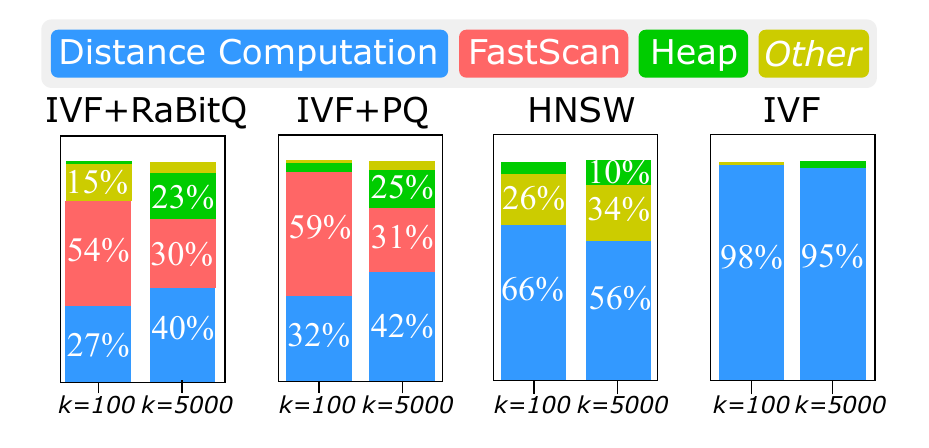}
\vspace*{-1em}
\caption{Time Overhead Breakdown of four methods at different $k$, where ``Distance computation'' denotes exact distance computation, ``FastScan'' denotes estimated distance computation, ``Heap'' denotes heap operations, and ``Other'' covers the remaining costs.}
\vspace*{-1em}
\label{fig:qs-exp-time-portion}
\end{figure}

\noindent\textbf{Our Method.} To address the first challenge, 
we observe that existing collectors typically maintain an exact top-$k$ set, where each stored candidate may be accessed and 
replaced. When $k$ is large and the stored distance-id pairs exceed the L1 cache capacity, this 
results in frequent L1 cache misses and high maintenance overhead. 
To overcome this, we propose a novel \textbf{bucket-based result buffer} ({\bbc}) that reorganizes candidate storage to maximize cache efficiency, {preserving exact results without maintaining exact top-$k$ order.} 
%
Specifically, it partitions the estimated 
distance range between the query and data objects into non-overlapping sub-ranges through one-dimensional quantization. Each sub-range 
corresponds to a bucket consisting of two linear buffers that sequentially store the IDs and distances of candidates falling within this sub-range. This design offers two key benefits: (1) 
it {lowers} 
ranking complexity by organizing candidates 
{into} buckets based on distance, {maintaining ordering across buckets while avoiding ordering within each bucket;} 
(2) it reduces L1 cache misses, as the sequential insertion pattern within each bucket enables the hardware prefetching mechanism to proactively prefetch relevant memory blocks into the L1 cache for subsequent candidate insertions.
Leveraging both bucket-level ranking and the number of objects stored in each bucket, we efficiently identify the buckets that contain the exact top-$k$ results and the threshold bucket that holds boundary candidates near the threshold distance. Together, these buckets form a \textit{candidate superset}. The upper bound distance of the threshold bucket serves as a \textit{relaxed threshold}. Both the candidate superset and 
the relaxed threshold can be maintained at low cost, since only a small number of buckets are involved. 
This design eliminates per-object access and replacement by operating only on bucket-level structures, where distant buckets are implicitly pruned once they become irrelevant. 
When the exact top-$k$ set is required, the final selection is restricted to the threshold bucket, as the bucket-level ranking guarantees that all preceding buckets contain only closer candidates. 
Leveraging the distance concentration phenomenon~\cite{curse1,curse2}, we further show that under an equal-depth partition of the distance range, {the error between the relaxed and exact thresholds are on the order of $10^{-2}$, keeping the selection cost negligible}, as 
supported by both theoretical guidance (Section~\ref{sec:result-buffer}) and empirical evaluation (Section~\ref{sec:qs-exp-results}).

{To address the second limitation, we design two new re-ranking algorithms tailored to different 
quantization methods.
For {\textit{bounded}} methods, 
we aim to skip re-ranking objects that are guaranteed to be either within  or outside the top-$k$ based on their estimated distance bounds, and  re-rank only uncertain ones. 
We first formulate an 
minimal re-ranking scenario that 
minimizes the number of re-ranked objects without sacrificing accuracy
and design a solution to achieve it. However, this approach incurs substantial heap overhead
that offsets its benefits. To address this, we design a greedy re-ranking algorithm that integrates seamlessly with the proposed result buffer, significantly reducing the number of re-ranked objects with a small extra cost. For \textit{unbounded} methods, the number of re-ranked objects cannot be reduced since their estimated distances 
lacking guaranties. Instead, we propose an early re-ranking strategy that tightly 
couples re-ranking with the result buffer. It computes exact distances for objects predicted to enter the re-ranking pool when their data are accessed, thus effectively reducing cache misses during exact distance computation.

{Building on these techniques, we develop a novel bucket-based result collector ({\bbc}) that substantially enhances the efficiency of existing quantization-based methods for large-$k$ ANN queries without compromising accuracy. {\bbc} integrates the proposed result buffer to gather candidates efficiently and incorporates the  newly designed re-ranking algorithms to produce the final results. 

The main contributions of this work are summarized as follows:

\begin{enumerate}[leftmargin=*,topsep=0pt]
\item We identify and analyze 
two major limitations of 
quantization-based 
methods for large-$k$ ANN queries: inefficiency of top-$k$ collectors and declined pruning effectiveness, both of which cause substantial performance degradation.
%
To our knowledge, these findings have not been reported in prior work. 

\item We propose a novel bucket-based result collector ({\bbc}), which introduces a 
bucket-based result buffer serving as a cache-efficient top-$k$ collector
and two new re-ranking algorithms tailored to {\textit{bounded}} and \textit{unbounded} quantization methods. 
{To the best of our knowledge, this is the first framework explicitly designed for large-$k$ ANN queries.}

\item Extensive experiments on real-world datasets show that:
    (1) {\bbc} accelerates existing quantization-based methods on large-$k$ ANN queries by up to 
    3.8$\times$ speedup at recall@$k$ = 0.95; 
    (2) the proposed result buffer reduces the overhead of the top-$k$ collector
    by an order of magnitude; 
    and (3) the new re-ranking algorithms speed up the re-ranking 
    efficiency by up to 1.8$\times$.

\end{enumerate}

%% file: problem.tex
\section{Background and Motivations}
\label{sec:preliminary-motivations}

\noindent\textbf{Problem Statement.}
Given a dataset $D$ of $N$ data objects, each represented by a $d$-dimensional vector,
{the Approximate Nearest Neighbor (ANN) query involves two phases: 
indexing and querying. In the indexing phase, it constructs a data structure based on the data vectors. In the querying phase, 
given a query $q$, it}
aims to efficiently retrieve the $k$ nearest vectors {
using the data structure \revision{under a similarity metric.}
}
Most existing studies~\cite{DBLP:journals/is/MalkovPLK14,jegouProductQuantizationNearest2011,jayaram2019diskann,milvus2021} {focus on the setup where} $k$ is small (e.g., $k=100$). 
{However, as discussed in Section~\ref{sec:intro}, the query with large $k$ (e.g., $k\ge5,000$) is important in many applications 
and introduces new challenges in algorithm design.}
In this study, we aim to develop an efficient solution for large-$k$ ANN queries \revision{($k \geq 5{,}000$) under commonly used similarity metrics in vector spaces, such as metric distances like Euclidean distance, as well as cosine similarity and inner product. Unless otherwise specified, we adopt Euclidean distance as the default metric and also discuss extensions to other similarity measures. }

\noindent\textbf{Modern Memory Hierarchy.} The modern memory hierarchy typically consists of L1, L2, and L3 caches and main memory. 
The L1 cache, located closest to the CPU core, provides the fastest access speed but has the smallest capacity, typically 32 KB. It stores the most frequently accessed data and instructions to minimize access latency. The L2 and L3 caches are located farther from the CPU cores, offering slower access than L1 but faster than main memory. Although main memory is much larger, typically ranging from several tens to hundreds of gigabytes (GB), its high access latency makes it slower. Therefore, the data required by the CPU are loaded into the L1 cache before processing. The transfer of data from main memory or from the L3/L2 caches to the L1 cache results in L1 cache misses, which introduce high latency, as shown in prior experimental evaluations~\cite{larkin2014back}. To reduce L1 cache misses, modern hardware automatically prefetches memory blocks adjacent to the currently accessed data into the L1 cache, while evicting less frequently used blocks to lower cache levels or main memory.


\noindent\textbf{Top-$k$ Collector.} Most ANN studies employ a {binary-heap priority queue} 
to maintain the $k$ nearest neighbors. 
Despite the binary heap achieving low latency when $k$ is small, it incurs frequent L1 cache misses and significantly higher latency {at larger values of $k$}
, as discussed in Section~\ref{sec:intro}. Although modern hardware supports L1 cache prefetching, it is only effective for data structures with regular memory access patterns, such as sequentially stored linear buffers. For data structures with irregular and unpredictable access patterns, such as binary heaps, its applicability is much more limited~\cite{malhotra2006library}.



\noindent\textbf{Quantization.} Quantization methods accelerate ANN search by efficiently estimating distances to prune distant candidates. During 
indexing, 
they construct a quantization codebook, assign each data vector to their nearest codebook vector, and store the codebook ids as compact quantization codes. 
At query time, they 
(1) pre-compute distances between the query and the codebook vectors, and (2) use these pre-computed distances to efficiently estimate query-object distances from the stored quantization codes, also known as quantized distances. In practice, these methods are often integrated with an inverted file (IVF) index~\cite{johnsonBillionscaleSimilaritySearch2019} to improve querying performance, with {IVF+RaBitQ}~\cite{gao2024rabitq} and {IVF+PQ}~\cite{jegouProductQuantizationNearest2011} being representative methods. The IVF index partitions the data vectors into $n_{cluster}$ clusters using the k-means algorithm during indexing and routes each query to the $n_{probe}$ nearest clusters at query time, thereby reducing the search space. Within these clusters, quantization methods execute their querying algorithm to obtain the final results.


\begin{figure*}[!t]
\begin{center}
\includegraphics[width=0.8\textwidth]{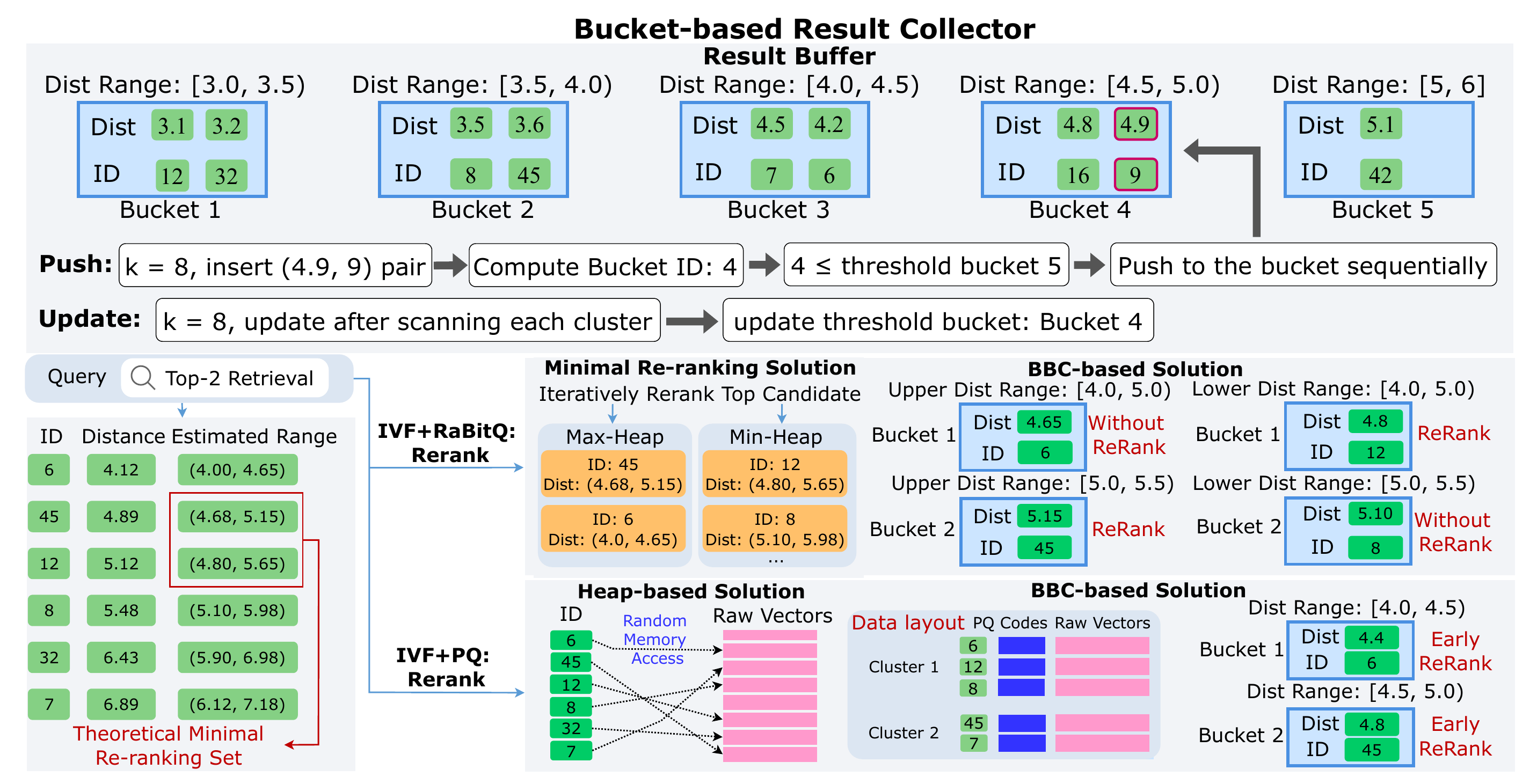}
\vspace*{-0.5em}
\caption{\revision{Illustration of the Proposed Bucket-based Result Collector.}}
\vspace*{-1.5em}
\label{fig:qs-framework}
\end{center}
\end{figure*}

Based on their pruning mechanisms and query processing strategies, quantization methods can be categorized into two types. 
In particular, {\textit{bounded}} methods such as RaBitQ~\cite{gao2024rabitq,rabitq2} {provide an estimated distance range with a high probabilistic guarantee (e.g., 99\%) and leverage these bounds for pruning. Specifically,} it employs a top-$k$ collector to maintain the currently found $k$ nearest neighbors and re-ranks any objects whose lower bounds fall below the collector's current threshold, as they may potentially enter the top-$k$ results. Because the objects stored in the collector often change rapidly during the early stage, the number of re-ranked objects is typically several times larger than $k$, as shown in Section~\ref{sec:qs-exp-results}. When $k$ increases, this results in a significantly higher re-ranking cost, as shown in Figure~\ref{fig:qs-exp-time-portion}. \textit{Unbounded} methods such as Product Quantization (PQ)~\cite{jegouProductQuantizationNearest2011} generally preset a hyperparameter $n_{cand} \gg k$ (e.g., $n_{cand}$=3,000 and $k$=100) to determine how many candidates are retrieved based on their estimated distances. When a query arrives, these methods employ a top-$k$ collector to gather $n_{cand}$ candidates according to their estimated distances, which are then re-ranked to produce the final results. Here, $n_{cand}$ is typically several times larger than $k$ to achieve high recall.
As $k$ increases, the number of re-ranked objects increases linearly, leading to a significant increase in the re-ranking cost, as shown in Figure~\ref{fig:qs-exp-time-portion}.

\noindent\textbf{Motivations.} While many ANN methods have been developed, no prior work has specifically investigated large-$k$ ANN queries. Therefore, we evaluate several representative ANN methods on large-$k$ ANN queries, 
where the top-$k$ collector is implemented using a binary-heap priority queue.
We 
present the evaluation results on the {C4} dataset \revision{in Figure~\ref{fig:qs-exp-motivation}}, which contains over 14 million passages. 
%
Figure~\ref{fig:qs-exp-time-portion} further presents a 
breakdown of time overheads under different $k$ based on VTune profiling. Details of the experimental setup are given in Section~\ref{sec:qs-exp-setup}. 
The experimental results reveal the performance degradation of these methods on large-$k$ ANN queries, and we have highlighted two major limitations of existing quantization-based methods in the Introduction. This motivate us to design a new algorithm for large-$k$ ANN queries.

%% file: framework.tex
\section{The BBC Method}
\label{sec:qs-method}
\subsection{Overview}
\label{sec:qs-overview}
In this section, we propose a novel bucket-based result collector ({\bbc}), composed of two new components: a bucket-based 
result buffer serving as the top-$k$ collector and two re-ranking algorithms.  
We proceed to give an overview of the two components.


First, the result buffer partitions the estimated distance range between the query and data objects into non-overlapping sub-ranges using one-dimensional quantization, as shown in Figure~\ref{fig:qs-framework}. Each sub-range corresponds to a bucket that contains two linear buffers, which sequentially store the IDs and distances of candidates. 
As presented in the Introduction, this design (1) provides bucket-level ranking of candidates based on their distances to the query, where candidates are ordered across buckets but remain unordered within each bucket (e.g., objects in Bucket 1 always have smaller distances to the query than those in Bucket 2). This design differs from binary heaps, which maintain a strict bucket-level order throughout, thereby reducing the ranking cost; 
and (2) exploits the hardware prefetching mechanism to reduce L1 cache misses. {This is because only the tail of the recently accessed linear buffer typically resides in the L1 cache, while the preceding elements are evicted, substantially alleviating L1 cache pressure.
}



%

Leveraging both bucket-level ranking and the number of objects stored in each bucket, we can efficiently identify the buckets that contain the exact top-$k$ results and the threshold bucket that holds boundary candidates near the threshold distance. 
For example, in Figure~\ref{fig:qs-framework}, when $k$ = 8, bucket 5 serves as the threshold bucket because the cumulative number of objects in the first five buckets reaches 8 (before inserting the object 9 with distance 4.9). 
The candidates within these buckets together form a \textit{candidate superset}. The upper bound distance of the threshold bucket serves as the \textit{relaxed threshold}, which is 6 in this case. 
%
%
At query phase, both can be efficiently updated. For example, 
when inserting a new object (e.g., object 9 with distance 4.9 in Fig.~\ref{fig:qs-framework}), 
we compute its bucket ID based on its distance to the query, compares it with the threshold bucket ID (e.g., Bucket 5), and appends the distance and ID to its corresponding bucket (e.g., pushing object 9 to Bucket 4).
%
%
After insertion, the threshold bucket can be updated by traversing buckets in order until the accumulated number of candidates reaches $k$, 
as the buckets are organized by distance range. {In Figure~\ref{fig:qs-framework}, when $k$ = 8, the threshold bucket shifts from bucket 5 to bucket 4 after inserting object 9.} 
The more distant buckets (e.g., Bucket 5) are no longer visited and are implicitly dropped without incurring additional cost. 
%
Finally, we only need to select a subset of objects from the threshold bucket and combine them with the objects in the preceding buckets to obtain the exact top-$k$ results.

Second, two re-ranking algorithms are designed for different types of quantization approaches to accelerate their re-ranking process, as illustrated in Figure~\ref{fig:qs-framework}. For \textit{bounded} methods, we establish 
a \revision{theoretical minimal re-ranking set} 
that minimizes the number of re-ranked objects without accuracy loss and propose a solution to achieve it. However, this solution incurs considerable heap overhead. To mitigate this, we propose a re-ranking algorithm built on our result buffer. 
It skips re-ranking objects that are definitely either inside 
or outside the top-$k$ 
and re-ranks only the uncertain ones near the threshold bucket. 
For \textit{unbounded} methods, we propose a novel early re-ranking algorithm that re-ranks many objects predicted to enter the re-ranking pool when accessing their data, 
reducing L1 cache misses from random memory access. 


In the rest of this section, we present the two components of {\bbc}: (1) the 
result buffer (Section~\ref{sec:result-buffer}) and (2) the two re-ranking algorithms
, along with the improved search algorithm (Section~\ref{sec:rerank}).

\subsection{Result Buffer}
\label{sec:result-buffer}
{We now describe how the result buffer partitions the estimated distance range between the query and objects into $m$ non-overlapping sub-ranges, as defined by the codebook $C$:}
\begin{equation}
C = \{ c_1, c_2, \ldots, c_{m+1} \}, \quad c_i < c_{i+1}.
\end{equation}
Accordingly, the sub-ranges are formally defined, {each corresponding to a bucket $B[i]$:}
\begin{equation}
{B}[i] = [c_i, c_{i+1}), \quad i = 1,2,\ldots,m.
\end{equation}
For each input distance–ID pair, the result buffer {determines the corresponding bucket by locating the interval in which the distance falls. Specifically, an object $o_i$ is assigned to bucket ${B}[j] $ if its distance from the query, $\mathrm{Dist}(q, o_i)$, satisfies $c_j \leq \mathrm{Dist}(q, o_i) < c_{j+1}$. The distance $\mathrm{Dist}(q, o_i)$ can refer to either the exact distance $\mathrm{Dist}_{\mathrm{exact}}$ or the estimated distance $\mathrm{Dist}_\mathrm{quant}$. 
Next we turn to the three core operations of the result buffer: \texttt{Push},  \texttt{Update}, and the \texttt{Collect} function, {which are used to collect the top-$k$ results.}
}


{The \texttt{Push} function in Algorithm~\ref{alg:result-buffer-workflow} (lines 1-4) details the procedure for inserting an object into the result buffer. Specifically, when a new object is inserted, the result buffer first computes its distance to the query and determines the corresponding bucket ID (line 2). It then compares this bucket index with the threshold bucket ID (line 3) and appends the distance–ID pair to 
the assigned bucket if the index does not exceed the threshold bucket ID (line 4). Here, using the threshold bucket ID for comparison essentially treats the upper bound of the threshold bucket's distance range as the relaxed threshold. 
In addition, bucket ID comparisons can benefit from fast SIMD-based integer comparison instructions, 
enabling simultaneous comparison of batches of objects. 
The quantization code computation 
can also be accelerated using SIMD instructions, as discussed later.}

{The \texttt{update} function in Algorithm~\ref{alg:result-buffer-workflow} (lines 5-11) describes the process of updating the threshold bucket. In particular, the buckets in the result buffer are arranged in ascending order of distance range, 
as shown in Figure~\ref{fig:qs-framework}. We accumulate the number of candidates in buckets in order until the total number reaches or exceeds $k$ (lines 6-9). Once this condition is met, the visited bucket is identified as the threshold bucket and returned (line 10). If the total number of objects stored in the result buffer is less than $k$, the threshold bucket is set to $\infty$, allowing all objects to be accepted (line 11). Since only dozens of buckets are involved, the update cost is negligible. 
Once the threshold bucket is updated, the more distant buckets are no longer accessed and are implicitly dropped, thus incurring no additional time cost for candidate maintenance. 
}

\begin{algorithm}[!t]
\small
\setcounter{AlgoLine}{0}
\LinesNumbered
\caption{The Workflow of Result Buffer}
\label{alg:result-buffer-workflow}
\SetKwFunction{Push}{\texttt{Push}}
\SetKwFunction{Update}{\texttt{Update}}
\SetKwFunction{Collect}{\texttt{Collect}}

\SetKwProg{Fn}{Function}{:}{}
\KwIn{Result buffer $B$; a set of clusters $cl$ to be scanned; retrieval size \(k\);}
\KwOut{The result buffer \(B\)}

\Fn{\Push{$q$, $o_i$, $\tau$, $B$}}{
    Compute \(\mathrm{Dist}(q, o_i)\) and bucket ID $j$\;
    \If{$j \leq \tau$}{
        Append (dist, $o_i$) to the tail of ${B}[j]$\;
    }
}

\Fn{\Update{$B$, $k$}}{
    $s \gets 0$\;
    \ForEach{$B[i] \in B$}{
        $s \gets s + |B[i]|$\;        
        \If{$s > k$}{
            \Return{$i$}\;
        }
    }
    \Return{$\infty$}\;
}

\Fn{\Collect{$q$, $cl$, $B$, $k$}}{
    Initialize threshold bucket $\tau \gets \infty$\;
    Construct codebook $C$ for $B$\;
    \ForEach{$cr \in cl$}{
        \ForEach{$o_i \in cr$}{
            \Push{$q$, $o_i$, $\tau$, $B$}\;
        }
        $\tau \gets \Update(B, k)$\;
    }
    $Res \gets \varnothing$\;
    \For{$i \gets 0$ \KwTo $\tau-1$}{
        $Res \gets Res \cup B[i]$\;       
    }
    $s \gets k - |Res|$\;
     Select the top-$s$ candidates from $B_{\tau}$ and append to $Res$ \;
    \Return{$Res$}\;
}
\end{algorithm}

{The \texttt{collect} function in Algorithm~\ref{alg:result-buffer-workflow} describes the workflow of 
collecting top-$k$ results based on (estimated) distance in IVF-based ANN methods. Specifically, we first initialize the threshold bucket to $\infty$ and construct the codebook $C$ for the result buffer, whose generation will be discussed later (lines 13–14). Objects are then inserted into the result buffer $B$ within each cluster using the \texttt{Push} function (lines 15–17). The threshold bucket is updated after processing each cluster using the \texttt{Update} function (line 18). This update is performed once per cluster because the relaxed threshold is very close to the exact threshold (as will be discussed later), and updating once per cluster helps to reduce the update cost. 
Once all clusters have been processed, the objects in the buckets preceding the threshold bucket are inserted into the result set $Res$ (lines 19-21). This is because the 
bucket-level ranking of candidates ensures that these objects are closer to the query than those in the subsequent buckets, thus falling within the top-$k$ candidates. Finally, we compute the number of objects $s$ that need to be selected from the threshold bucket (line 22), choose the top-$s$ objects from it, and add them to $Res$ (line 23). $Res$ now contains the exact top-$k$ results and is returned (line 24). This design substantially reduces the cost of maintaining exact top-$k$ results, as the selection operation is performed only once in the final stage within a single bucket.}


\noindent\textbf{{Deciding the Number of Buckets $m$.}}
A key consideration is how to determine the number of buckets $m$. If $m$ is too large, the increased number of vectors to be written raises the L1 cache miss rate. 
If $m$ is too small, objects are concentrated in just a few vectors, 
resulting in a costly final selection process (as shown in Section~\ref{sec:qs-exp-results}). 
\revision{To balance these factors, we aim to maximize $m$ based on the 
L1 cache capacity $C_{L1}$, while accounting for the space required by quantization codes $C_{quant}$ and lookup tables $C_{lut}$. Accordingly, the number of buckets $m$ is given by the following equation:}
\begin{equation}
m = \frac{C_{L1} - C_{quant} - C_{lut} }{256},
\label{eq:select-m}
\end{equation}
\revision{where $C_{quant} = 2 \times 32 \times \frac{d}{M} \times \frac{B}{8}$ is the space for quantization codes of the current and subsequent processed batches. Here, $d$ is the dimensionality, $M$ is the number of sub-vectors, and $B$ denotes the bits per sub-vector. $C_{lut} = \frac{d}{M} \times 2^B$ is the size of the lookup table. The denominator 256 reflects hardware prefetching considerations: 
the result buffer maintains two separate linear buffers for IDs and distances, and modern hardware typically prefetches two 64-byte cache lines for sequential access, we reserve $m \times 2 \times 2 \times 64 = 256m$ bytes to ensure the active tails of all buckets reside in the L1 cache. Notably, since we incorporate hardware parameters into Equation~\ref{eq:select-m}, the parameter $m$ will adjust accordingly when hardware settings change. Moreover, since Equation~\ref{eq:select-m} accounts for the memory overhead of the ANN algorithm and not all buckets are frequently accessed, small variations in $m$ do not lead to significant increases in L1 cache misses or latency (see Exp-6).} 

\noindent\textbf{{Deciding the Codebook $C$.}} 
\revision{We now describe} how to \revision{construct} the codebook \revision{$C = \{ c_1, c_2, \ldots, c_{m+1} \}$. 
The codebook is designed to satisfy two critical properties:} (1) Precision, ensuring that the relaxed threshold \revision{(e.g.,  $c_i$)} remains close to the precise value. This is crucial because it affects the efficiency of the final selection, and significant deviations could make the process time-consuming; (2) Efficiency, ensuring low latency in both codebook generation and quantization code computation. Since the result buffer serves as the top-$k$ collector in the ANN search, slow generation and computation would offset its benefits. Note that the codebook requires dynamic generation for each query, rather than pre-computation, as query-object distance distributions vary across queries.

\revision{To quantify precision, we formalize it as the total quantization error. Given a query $q$, for each object $o_i \in D$, let $\mathrm{Dist}(q, o_i)$ denote its distance to the query and $a_i$ denote its assigned quantization code. Each code corresponds to a bucket whose upper boundary is $c_{a_{i+1}}$. The precision objective $Cost(C, D)$ is then defined as:}
\vspace*{-0.5em}
\begin{equation}
{Cost(C, D) = \sum_{i=1}^N \left|c_{a_{i+1}} - \mathrm{Dist}(q, o_i)\right|},
\label{eq:cost}
\end{equation}
The \revision{corresponding} optimal solution \revision{$\{C, D\}$} is given by:
\begin{equation}
\{C, D\} = \arg\min_{\{C, D\}} Cost(C, D).
\label{eq:objective-1}
\end{equation}
The problem is then to find a centroid codebook $C$ that minimizes Equation~\ref{eq:cost}. Because computing the distance for all objects is impractical, an exact solution to this problem becomes infeasible. The two common approximate methods for one-dimensional quantization are equal-depth partition and equal-width partition, which are described as follows:

\begin{itemize}[leftmargin=*, topsep=0pt]
  \item \textbf{Equal-depth partition}~\cite{muralikrishna1988equi}: This method divides the data range into intervals with equal data points, resulting in non-uniform intervals based on the data distribution. While it maximizes bucket utilization and has higher precision, it involves a slower generation and computation process.
  
  \item \textbf{Equal-width partition}~\cite{piatetsky1984accurate}: This method divides the data range into intervals of equal width, regardless of the data distribution. 
  While it enables faster codebook generation and code computation, it suffers from lower precision 
  because it does not adapt to the data distribution.
\end{itemize}
First, we demonstrate that the precision level of equal-{depth} quantization meets our requirements. In particular, in high-dimensional space, the distance concentration phenomenon~\cite{curse1, curse2} causes distances between vectors to concentrate around a value \revision{with only a small deviation. Based on this, 
we provide (1) a quantitative analysis 
with a proof sketch, with the full proof deferred to the technical report~\cite{fullversion} and (2) experimental validation on real-world high-dimensional datasets (see Exp-4). }

\begin{figure}[!t]
\centering
\includegraphics[width=0.75\columnwidth]{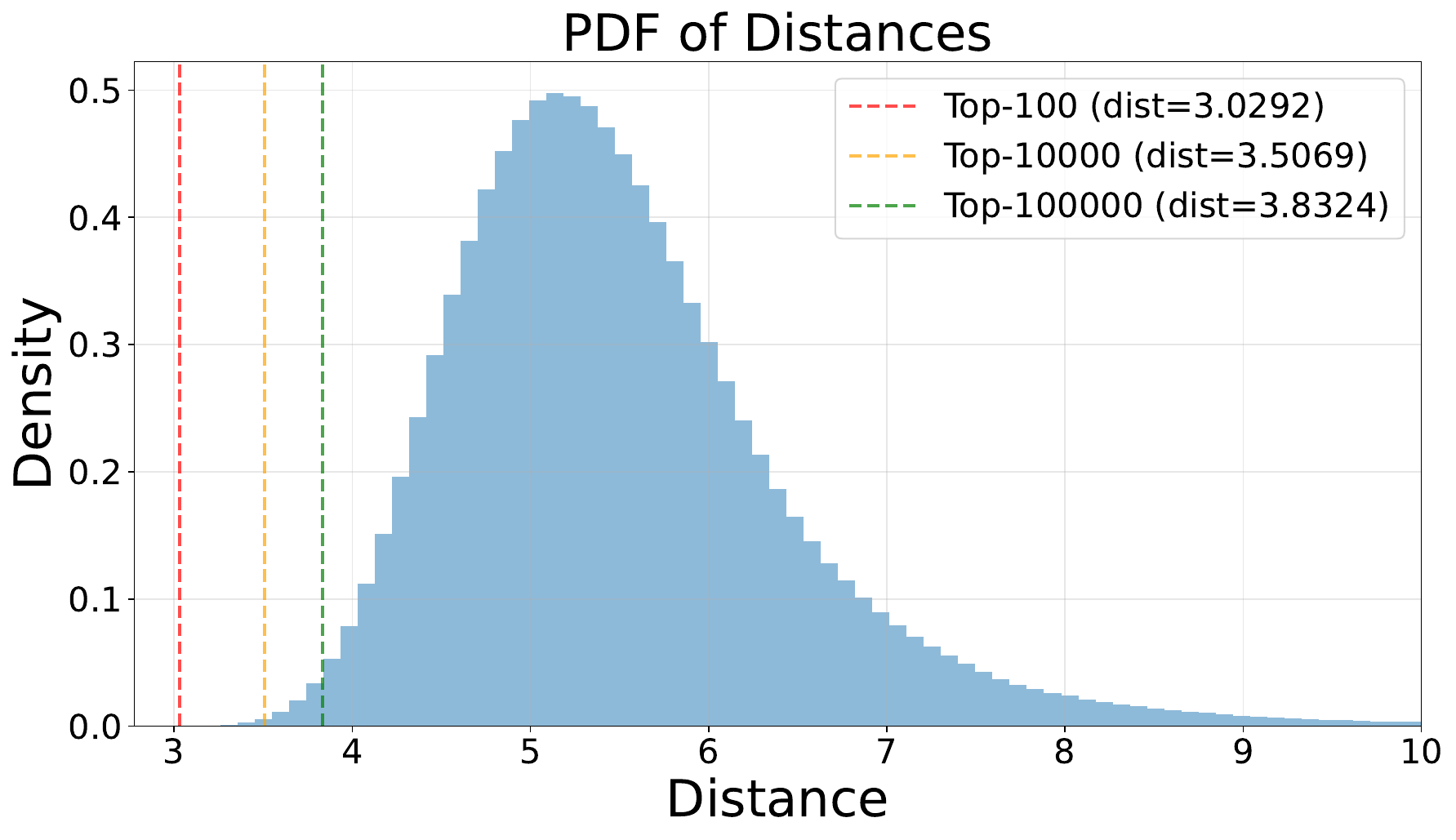}
\vspace*{-1em}
\caption{\revision{The probability density function (PDF) of distances between the query and data vectors on the C4 dataset.}}
\vspace*{-1.5em}
\label{fig:qs-pdf}
\end{figure}

\vspace*{-0.5em}
\begin{theorem}[\revision{Expected Mean Absolute Error}]
\label{th:error}
\revision{Let $q,o \in \mathbb{R}^d$ be random vectors with norms satisfying $\alpha < \|q\|_2, \|o\|_2 < \beta$, and let $R=\|q-o\|_2 \in [0,2\beta]$ denote their Euclidean distance. Assume that $R$ has a sub-Gaussian left tail around a concentration value $\mu \in [\alpha, 2\beta]$, i.e., there exist absolute 
constants $c_1,c_0>0$ 
such that for any $x \le \mu$,}
\[
\mathbb{P}(R \le x) \le 2c_1 \exp(-c_0 d (\mu-x)^2).
\]
\revision{For an integer $m \ge 2$, let $b_1,\dots,b_{m+1}$ be an equal-depth partition of $[0,\mu]$ with $b_1=0$ and $b_{m+1}=\mu$, satisfying}
\[
\mathbb{P}(R \le b_i) = \frac{i-1}{m}\,\mathbb{P}(R \le \mu), \quad i=1,\dots,m+1.
\]
\revision{Define the quantized distance $\widehat{R}$ by mapping $R \in [b_i,b_{i+1}]$ to $b_{i+1}$. Then the expected absolute quantization error over $[0,\mu]$ satisfies}
\[
\mathbb E|R-\widehat R|
\;\le\;
c_1 \sqrt{\frac{\pi}{c_0 d}}
\;+\;
\sqrt{\frac{\log(2 c_1 m)}{c_0 d}}.
\]
\end{theorem}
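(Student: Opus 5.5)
The plan is to bound the quantization error pointwise by the distance to the top of the range, and then integrate the left-tail assumption. Everything is read on the event $\{R\le\mu\}$; I take the statement's expectation over $[0,\mu]$ to mean $\mathbb E\bigl[|R-\widehat R|\,\mathbf 1\{R\le\mu\}\bigr]$.

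\textbf{Step 1 (pointwise bound).} If $R\in[b_i,b_{i+1}]$, then $\widehat R=b_{i+1}\ge R$. Hence $|R-\widehat R|=b_{i+1}-R\le \mu-R$, since $b_{i+1}\le b_{m+1}=\mu$. This holds for every bucket.

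\textbf{Step 2 (layer-cake integration).} By Step 1 and Fubini,
\[
\mathbb E\bigl[(\mu-R)\mathbf 1\{R\le\mu\}\bigr]=\int_0^{\mu}\mathbb P(R\le x)\,dx .
\]
Substituting the sub-Gaussian left tail and extending the integration range to $(-\infty,\mu]$ gives
\[
\int_0^{\mu}\mathbb P(R\le x)\,dx\;\le\;2c_1\int_0^\infty e^{-c_0 d t^2}\,dt\;=\;c_1\sqrt{\frac{\pi}{c_0 d}} .
\]
This is the first term of the bound.

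\textbf{Step 3 (where the $\log(2c_1 m)$ term enters).} The crude bound $\mu-R$ is loose for $R$ far below $\mu$. More importantly, the equal-depth structure has to be used if the expectation is normalized by $\mathbb P(R\le\mu)$, which is the reading I expect the full version to use. I would set
\[
x^\star=\mu-\sqrt{\frac{\log(2c_1 m)}{c_0 d}},
\]
so that the tail hypothesis gives $\mathbb P(R\le x^\star)\le 1/m$.

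For $R\ge x^\star$, the error is at most $\mu-x^\star$, which is exactly the second term.

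For $R<x^\star$, the event has probability at most $1/m$. Compared with the per-bucket mass $\mathbb P(R\le\mu)/m$, this means only $O(1)$ low buckets are affected. On those buckets I would use the Step 2 integral restricted to $[0,x^\star]$, which is still at most $c_1\sqrt{\pi/(c_0 d)}$. Adding the two regions gives the stated bound.

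\textbf{Main obstacle.} The difficulty is the normalization in Step 3. Dividing by $\mathbb P(R\le\mu)$ could inflate the integral bound when that probability is small. The fix I would try first is to compare everything against the quantiles $b_i$ and not against $\mathbb P(R\le\mu)$ directly. Concretely, $\mathbb P(R\le b_i)=\tfrac{i-1}{m}\mathbb P(R\le\mu)$, and combining this with the tail bound gives
\[
\mu-b_i\le\sqrt{\frac{\log\bigl(2c_1 m/((i-1)\mathbb P(R\le\mu))\bigr)}{c_0 d}} ,
\]
so the bucket widths near $\mu$ are controlled by the second term. The low buckets are then absorbed into the Gaussian integral term. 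If this bookkeeping does not close cleanly, the fallback is to state the result unconditionally. In that form Step 2 alone already proves the inequality, with the second term as slack.
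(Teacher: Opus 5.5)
\paragraph{Your Steps 1--2 already prove the statement.} Under the reading the paper adopts, they are a complete and correct proof. The paper writes $\mathbb E|R-\widehat R|=\int_0^{\mu}\Delta(x)\,dx$ with $\Delta(x)=\mathbb P(R\le x)-\mathbb P(\widehat R\le x)$, built from unconditional probabilities. That quantity is exactly your $\mathbb E\bigl[|R-\widehat R|\,\mathbf 1\{R\le\mu\}\bigr]$.

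The core argument matches the paper's. Its bound $\Delta(x)\le\mathbb P(R\le x)$ drops $\mathbb P(\widehat R\le x)$, which is the CDF form of your pointwise bound $\widehat R\le\mu$. Both arguments then finish with the same Gaussian integral.

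The difference is the split. The paper cuts $[0,\mu]$ at $z=\mu-\sqrt{\log(2c_1m)/(c_0d)}$, which is your $x^\star$. It applies the Gaussian bound only on $[0,z]$ and the trivial bound $\Delta\le 1$ on $[z,\mu]$. That trivial bound is the only source of the $\log(2c_1m)$ term. The paper extends its Gaussian integral to all of $(-\infty,\mu]$ anyway, so the split buys nothing. Neither argument ever uses the value of $z$ or the equal-depth condition. Your direct route therefore gives the sharper bound $c_1\sqrt{\pi/(c_0d)}$. It also shows that the second term is slack and that the partition plays no role in this estimate.

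\paragraph{Drop Step 3 rather than trying to close it.} The normalized reading you anticipate is not the one the paper proves. It is also false in general, so no bookkeeping can rescue it. Here is a counterexample:
\begin{itemize}
\item Let $R$ be uniform on $[0,\mu]$ with probability $\varepsilon\le 2c_1e^{-c_0d\mu^2}$, and equal to a fixed value above $\mu$ otherwise.
\item For $0\le x\le\mu$ you get $\mathbb P(R\le x)\le\varepsilon\le 2c_1e^{-c_0d(\mu-x)^2}$, and $\mathbb P(R\le x)=0$ for $x<0$. So the tail hypothesis holds.
\item The equal-depth points are $b_i=(i-1)\mu/m$.
\item Then $\mathbb E[|R-\widehat R|\mid R\le\mu]=\mu/(2m)$, which does not shrink with $d$. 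The right-hand side, by contrast, tends to $0$ as $d\to\infty$ with $c_0,c_1,m$ fixed.
\end{itemize}
Your own estimate for $\mu-b_i$ already signals this: it becomes vacuous once $\mathbb P(R\le\mu)$ is exponentially small in $d$.

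For the same reason, your claim that only $O(1)$ low buckets meet $\{R<x^\star\}$ is wrong. That event has mass at most $1/m$, while each bucket has mass $\mathbb P(R\le\mu)/m$. So up to $1/\mathbb P(R\le\mu)$ buckets can be affected.
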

\begin{proof}[\revision{Proof Sketch}]
{The key idea is to view the quantization error as the area between two cumulative distribution functions, i.e., the integral of the difference between the true CDF and the quantized CDF, which can be represented as $\int_{0}^{\mu} \Delta(x)\,dx$, where $\Delta(x)=\mathbb{P}(R \le x)-\mathbb{P}(\widehat{R} \le x)$.}
{We partition the integration regions $[0, \mu]$ into two regions: $[0,z]$ and $[z,\mu]$, where the transition point is chosen as $z = \mu - \sqrt{\frac{\log(2 c_1 m)}{c_0 d}}$ according to sub-Gaussian concentration. }
{For $[0,z]$, the integral is bounded by the Gaussian integral}
\[
\begin{aligned}
\int_{0}^{z} \Delta(x)\,dx
&\le \int_{0}^{z} \mathbb{P}(R \le x)\,dx
\le 2c_1 \int_{-\infty}^{\mu} e^{-c_0 d(\mu-x)^2}\,dx \\
&= c_1 \sqrt{\frac{\pi}{c_0 d}} .
\end{aligned}
\]
{For $[z,\sqrt{2}]$, we use a worst-case bound of $1$ on the CDF difference, so the contribution is bounded by the interval width $\sqrt{\frac{\log(2c_{1}m)}{c_0d}}$.}

{Summing the two contributions yields the stated bound.}
\end{proof}

\noindent\revision{\textbf{Extension of Theorem~\ref{th:error} to real-world datasets.} First, when $d$ ranges from hundreds to thousands and $m$, computed by Equation~\ref{eq:select-m}, is typically on the order of dozens, the bound is on the order of $10^{-2}$, which is sufficiently tight. Second, the assumption that $R$ has a sub-Gaussian left tail around a concentration value $\mu \in [\alpha, 2\beta]$ follows from the distance concentration phenomenon in high-dimensional spaces~\cite{gao2024rabitq, curse1, curse2,aggarwal2001surprising}, and the phenomenon is widely observed in real-world datasets. We further illustrate this phenomenon on the C4 dataset. As shown in Figure~\ref{fig:qs-pdf}, the distances between the query and data vectors concentrate around 5.2, with most values lying in $[4,8]$. Third, we restrict quantization to the interval $[0,\mu]$ because the result buffer is designed to collect the top-$k$ results. We therefore target on estimating the distance range between the query and the top-$k$ data vectors, and apply equal-depth quantization only to this subset rather than the entire dataset. Figure~\ref{fig:qs-pdf} illustrates this: distances between the query and the top-100,000 candidates (e.g., 3.8324) are significantly smaller than the concentration distance (e.g., 5.2), implying that the derived theoretical bound serves as a loose upper bound. This also indicates that, even under highly skewed data distributions, including extreme outliers far from the query, the tight bound remains unaffected, since such points are excluded from the equal-depth quantization process. We empirically validate the above bound on real-world datasets, showing that the gap between the quantized values and the true threshold values is on the order of $10^{-2}$ or smaller (see Exp-4).} 

\noindent\textbf{Codebook Generation Based on Estimated Distance.} We now detail how to construct the codebook $C$ for the result buffer. 
{
To \revision{efficiently estimate the distance range of the result buffer}, we sample a subset of the dataset, denoted as $D_{\text{sample}}$, which typically consists of tens of thousands of objects, and quickly compute their estimated distances. In practice, $D_{\text{sample}}$ is formed using objects from the 5–10 nearest clusters.} 
We then perform a partial sort on the estimated distances to obtain the top-$k$ results. 
Since the partial sort is performed only once, its computational cost is negligible. Afterwards, we derive the minimum distance $d_{\min}$ and maximum distance $d_{\max}$ from the \revision{local top-$k$ results. 
It is evident that the $d_{\max}$ are necessarily farther than the true top-$k$ estimated distance. Therefore, the estimated distance range will not overflow at $d_{max}$, any potential overflow can only occur at $d_{min}$, which rather rare and can be safely handled by the boundary control, as described below.}  To maintain computational efficiency, we follow the approach of~\cite{mousavi2011fast}, which first computes $n_{ew}$ equal-width buckets and then reassigns these buckets into $m$ equal-depth buckets, where a lookup table $\text{map}$ is used to preserve the correspondence between them.

\noindent\textbf{Quantization Code Computation.} For newly inserted objects, we first compute their equal-width codes and subsequently clamp those outside the range $[0,m]$ to prevent boundary overflow, and then obtain the corresponding mapping ID via the lookup table, as formulated below:
\begin{equation}
\label{eq:remap}
a_i = 
  \text{map}\left[\mathrm{clamp}\!\left(
    \left\lfloor \frac{\mathrm{Dist(q,o_i)} - d_{\min}}{\delta} \right\rfloor,
    0, m\right]
  \right)
\end{equation}
\revision{where $\mathrm{clamp}(x, y, z) = \max(y, \min(x, z))$.} SIMD instructions can be employed to accelerate the computation of $\left\lfloor \tfrac{\mathrm{Dist(q,o_i)}-d_{\min}}{\delta} \right\rfloor$ \revision{as well as the $\mathrm{clamp}$ operation}, enabling the batch processing of dozens of objects. 
We fix $n_{\text{ew}}$ to 256, so that the mapping can be stored using {uint8}. {This has two advantages: (1) The mapping requires only 256 bytes in total, allowing it to reside in L1 cache with minimal memory overhead; (2) AVX instructions can process 4$\times$ as many {uint8} values per instruction as {float32} values, which speeds up batch comparison.} 


\noindent\revision{\textbf{Complexity Analysis.} The time complexity of the \texttt{Push} operation is $O(1)$, dominated by the bucket ID computation, which is significantly more efficient than the $O(\log (k))$ complexity of binary heap. The \texttt{update} operation has a time complexity of $O(m)$. Given that  $m$ computed by Equation~\ref{eq:select-m} is typically dozens and the operation is performed periodically, its amortized time complexity can be considered O(1), which is more efficient than the $O(\log(k))$ cost of binary heap. The \texttt{collect} operation has a time complexity of $O(N)$. Since it is invoked only once at the end of the query process, its overall cost is negligible.}



\subsection{Re-rank Algorithms}
\label{sec:rerank}
We now discuss how to integrate the result buffer with existing quantization-based methods:  {IVF+RaBitQ} and {IVF+PQ}. 
We aim to enhance their efficiency for handling large-$k$ ANN queries.
To achieve this, we 
introduce several new techniques. 

\begin{algorithm}[!t]
\small
\caption{\revision{Minimal Re-ranking solution of IVF+RaBitQ}}
\label{alg:RabitQ-optimal}
\KwIn{Query $q$, retrieval size $k$, and the object set $O$ to be scanned.}
\KwOut{Top-$k$ results}
Initialize max-heap $H_u$ and min-heap $H_l$\;

\ForEach{$o \in O$}{
    Compute the lower and upper bounds $o_{lb}, o_{ub}$ from $o$ to $q$\;
    \If{$o_{ub} < H_u.\mathrm{top}_{ub}$}{
        Insert $(o, o_{lb}, o_{ub})$ into $H_u$\;
        \If{$|H_u| > k$}{
            Move top item from $H_u$ to $H_l$\;
        }
    }
    \ElseIf{$o_{lb} < H_u.\mathrm{top}_{ub}$}{
        Insert $(o, o_{lb})$ into $H_l$\;
    }
}

$Vis \leftarrow \varnothing$\;
\While{$H_u.\mathrm{top}_{ub} > H_l.\mathrm{top}_{lb}$}{
    Select $o$ from $H_u$ or $H_l$ with smaller lower bound\;
    Compute exact distance of $o$ and insert into $H_u$\;
    $Vis \leftarrow Vis \cup \{o\}$\;
    \If{$|H_u| > k$}{
        Move top item from $H_u$ to $H_l$ if not visited\;
    }
}

\Return{$H_u$}\;
\end{algorithm}

\noindent\textbf{Integrating with {IVF+RabitQ}}. 
We introduce a novel re-ranking algorithm based on our result buffer by exploiting the bound property~\cite{gao2024rabitq}, namely, the true distance fall within the estimated bound with high probability (e.g., 99\%). 
This property enables us to efficiently estimate the distance range between each visited data object and the query. 
As a result, it is unnecessary to re-rank data objects that are guaranteed to be within the top-$k$ 
or definitively outside the top-$k$. Re-ranking is required only for ambiguous candidates whose inclusion in the top-$k$ remains uncertain. 
Based on this insight, we formally define the 
minimal re-ranking scenario. 


\begin{observation}[Minimal Re-ranking Scenario]
\label{def:optimal-rabitq}
$\mathrm{Dist}_k$ denotes the distance \revision{between the query and the boundary top-$k$ object}. When the objective is to minimize the number of objects to be re-ranked, the {minimal re-ranking scenario} for 
bounded quantization method is to re-rank only those objects $o \in D$ whose lower and upper bounds satisfy $[o_{lb}, o_{ub}] \cap \{\mathrm{Dist}_k\} \neq \varnothing$. This set represents the theoretical minimal set of objects that must be re-ranked without accuracy loss. 
\end{observation}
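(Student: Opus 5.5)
We have to prove two things about the set $S^\ast=\{o\in D : o_{lb}\le \mathrm{Dist}_k\le o_{ub}\}$: re-ranking only $S^\ast$ always gives the exact top-$k$, and no correct procedure can re-rank fewer objects. The argument treats the bounds as valid: every true distance lies in $[o_{lb},o_{ub}]$, which holds with the probabilistic guarantee of RaBitQ~\cite{gao2024rabitq}. It also assumes distinct distances, or a fixed tie-breaking rule, so that $\mathrm{Dist}_k$ and the boundary object are well defined.

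\emph{Sufficiency.} Every object $o\notin S^\ast$ falls into one of two cases. If $o_{ub}<\mathrm{Dist}_k$, then $\mathrm{Dist}(q,o)\le o_{ub}<\mathrm{Dist}_k$, so $o$ is certainly in the top-$k$. If $o_{lb}>\mathrm{Dist}_k$, then $o$ is certainly outside it. Let $n_{in}$ be the number of objects of the first kind. Sorting the exactly re-ranked objects of $S^\ast$ and taking the $k-n_{in}$ closest completes the top-$k$ correctly. This works because every object of $S^\ast$ is sandwiched by $\mathrm{Dist}_k$ in bound-space, and its exact rank relative to $\mathrm{Dist}_k$ is decided only by its true distance. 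I would also note that the boundary object itself lies in $S^\ast$, so $S^\ast$ is nonempty and $\mathrm{Dist}_k$ is realised after re-ranking.

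\emph{Minimality.} This is the main step, and I would prove it by an adversary argument. Suppose an algorithm skips some $o\in S^\ast$, so it observes only $[o_{lb},o_{ub}]$, which contains $\mathrm{Dist}_k$. Since the interval straddles $\mathrm{Dist}_k$, I construct two instances that agree on every observed quantity: all bounds, and all exact distances the algorithm did compute. In one instance $\mathrm{Dist}(q,o)$ is placed slightly below the $k$-th smallest distance among the other objects; in the other it is placed slightly above. In the first instance $o$ belongs to the top-$k$; in the second it does not. The algorithm's output is the same in both, so it errs on one of them. Hence any exact procedure must re-rank every object of $S^\ast$, giving $|S^\ast|$ as the lower bound.

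\emph{Main obstacle.} The delicate part is the edge cases of this perturbation. When $o$ itself is the boundary object, moving $o$ also changes $\mathrm{Dist}_k$. I would handle this by comparing against $\mathrm{Dist}_k^{-o}$, the $k$-th smallest distance among the objects other than $o$, and splitting on whether $o_{lb}<\mathrm{Dist}_k$ or $o_{lb}=\mathrm{Dist}_k$ (and likewise for $o_{ub}$). When an interval endpoint equals $\mathrm{Dist}_k$ exactly, the placement requires a tie-breaking convention. Under the distinct-distance assumption, a small enough perturbation suffices in every case.
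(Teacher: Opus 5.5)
Your sufficiency half is fine, read as a certificate. The boundary object always lies in $S^\ast$, and once $S^\ast$ is re-ranked, every assignment of distances consistent with the bounds gives the same top-$k$.

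The gap is in the minimality step, and it cannot be closed for the statement as written. For an object $o\in S^\ast$ that belongs to the true top-$k$ (in particular the boundary object), you rightly switch to $\mathrm{Dist}_k^{-o}$. But for such $o$ this quantity is the $(k+1)$-th smallest distance $\mathrm{Dist}_{k+1}$, not $\mathrm{Dist}_k$. The condition $o_{lb}\le\mathrm{Dist}_k\le o_{ub}$ says nothing about whether $o_{ub}$ exceeds $\mathrm{Dist}_{k+1}$, and moving other unobserved objects need not help.

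Concretely, let $k=1$, let $o$ have bounds $[1,2]$ and true distance $1.5$, and let $p$ have bounds $[2.5,3]$. Then $\mathrm{Dist}_1=1.5$ and $S^\ast=\{o\}$. Yet $o_{ub}<p_{lb}$ already certifies $o$ as the nearest neighbour, so nothing needs to be re-ranked. No pair of indistinguishable instances exists, and your claim that a small perturbation suffices in every case is false. The paper's own minimal re-ranking algorithm also terminates on this instance without re-ranking anything. Your adversary is correct only for objects outside the true top-$k$ with $o_{lb}<\mathrm{Dist}_k$, where moving $o$ just below $\mathrm{Dist}_k$ changes the answer.

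So you must change what is proved. There are two ways to do this.
\begin{itemize}
\item Require the procedure to return the exact boundary distance $\mathrm{Dist}_k$ as well as the set. Your adversary then also handles top-$k$ members: perturb the boundary object inside its interval, or move a non-boundary member with $o_{ub}>\mathrm{Dist}_k$ into $(\mathrm{Dist}_k,\min(o_{ub},\mathrm{Dist}_{k+1}))$. This keeps the set but changes the $k$-th distance, and $S^\ast$ becomes exactly minimal up to endpoint ties.
\item Keep pure set retrieval and change the characterization. With $T$ the true top-$k$, the smallest set whose re-ranking certifies the answer has size $\min_{t\in(\mathrm{Dist}_k,\mathrm{Dist}_{k+1}]}\bigl|\{o\in T: o_{ub}\ge t\}\cup\{p\notin T: p_{lb}<t\}\bigr|$. $S^\ast$ is essentially the member of this family with $t\downarrow\mathrm{Dist}_k$, so it is sufficient but not minimal in general.
\end{itemize}

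For comparison, the paper never proves necessity either. It states the result as an observation, and its appendix argues only the converse direction for its algorithm. There it shows that each object the algorithm re-ranks has an interval meeting $\mathrm{Dist}_k$, using the fact that the $k$-th smallest upper bound is at least $\mathrm{Dist}_k$. It takes ``meets $\mathrm{Dist}_k$ implies unavoidable'' for granted.
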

\revision{Figure~\ref{fig:qs-framework} illustrates this scenario, where we aim to retrieve the top-2 results. Objects 6 and 45 are the true top-2 results, where object 45 serves as the boundary top-$k$ item determining inclusion. The estimated range presents the estimated distance ranges based on RaBitQ. When these ranges overlap with the boundary distance (e.g., objects 45 and 12), the corresponding candidates must be re-ranked, forming the theoretical minimal re-ranking set.}



\begin{algorithm}[!t]
\small
\setcounter{AlgoLine}{0}
\LinesNumbered
\caption{Improved Search Algorithm of {IVF+RaBitQ}}
\label{alg:RabitQ-greedy}
\KwIn{Query $q$, retrieved size $k$, and the clusters to be scanned $cl$.}
\KwOut{Top-$k$ results.}

Initialize two result buffers ${B}_u$ and ${B}_l$\;
Generate codebook $C$ for ${B}_u$ and ${B}_l$\;

\ForEach{$cr \in cl$}{
    \ForEach{$o_i \in cr$}{
        Compute the lower/upper bounds $o_{lb}$/$o_{ub}$ for $o_i$ and their respective quantization codes $a_{lb}$, $a_{ub}$\;

        \If{$a_{ub} < \tau$}{
            Insert $o_i$ into ${B}_u$\;
        }
        \ElseIf{$a_{lb} < \tau$}{
            Insert $o_i$ into ${B}_l$\;
        }
    }
    $\tau \leftarrow \texttt{Update}({B}_u, k)$\;
}

Insert objects from ${B}_u$ into ${B}_l$\;

$i \leftarrow 0$, \quad $j \leftarrow \tau$\;
Initialize result buffer ${B}_\mathrm{exact}$ with codebook $C$\;

\While{$i < j$}{
    \ForEach{$o \in {B}_l[i] \cup {B}_u[j]$}{
        Compute exact distance ${Dist}_{\mathrm{exact}}(q, o)$\;
        Insert $o$ into ${B}_{\mathrm{exact}}$ based on ${Dist}_{\mathrm{exact}}(q, o)$\;
    }
    Clear ${B}_l[i]$ and ${B}_u[j]$\;
    $i \leftarrow 0$, $j \leftarrow$ index of the threshold bucket in ${B}_u \cup {B}_\mathrm{exact}$\;
    
    \While{${B}_l[i]$.empty()}{
        $i \leftarrow i + 1$\;
    }
}

\Return the objects in $B_u \cup B_{\mathrm{exact}}$ as the final top-$k$ results\;
\end{algorithm}

\noindent\textbf{Solution to the Minimal Re-ranking Scenario.} 
We design Algorithm~\ref{alg:RabitQ-optimal} to achieve the minimal re-ranking scenario described in Observation~\ref{def:optimal-rabitq}. 
It consists of two phases: candidate collection phase and re-ranking phase. In the candidate collection phase, \revision{it maintains the top-$k$ candidates with the smallest upper bounds in a max-heap $H_u$ , while storing objects whose lower bounds fall below the $k$-th upper bound in a min-heap $H_l$ (lines 2-9). In the re-ranking phase, a set $Vis$ tracks objects whose exact distances have been computed (line 10). At each step, the algorithm selects the object with the smaller lower bound from the tops of the two heaps for exact distance evaluation, and updates the heaps accordingly (lines 11–16). The process terminates when the largest upper bound in $H_u$ is smaller than the smallest lower bound in $H_l$ (line 13). The correctness proof is provided in the technical report~\cite{fullversion} due to page limitations. Figure~\ref{fig:qs-framework} gives an example where objects 45 and 12 are iteratively re-ranked. The process terminates when objects 6 and 8 reach the top, as the lower bound of object 8 exceeds the upper bound of object 6. However, maintaining a max-heap of size $k$ and an unbounded min-heap incurs prohibitively high overhead when $k$ is large. As a result, this approach can even be slower than IVF+RaBitQ in practice (Section~\ref{sec:qs-exp-results}).}
\begin{algorithm}[!t]
\small
\setcounter{AlgoLine}{0}
\LinesNumbered
\caption{Improved Search Algorithm of {IVF+PQ}}
\label{alg:early-rerank-opq}
\KwIn{Query $q$, the number of objects to be retrieved $k$, the clusters to be scanned $cl$, the number of objects to be re-ranked $n_{cand}$.}
\KwOut{Top-$k$ results.}
Initialize result buffer $B$\;
Sample a subset $D_{\text{sample}} \subset D$\;


Produce codebook $C$ for $B$\;
Generate the predicted threshold bucket $\tau^{\mathrm{pred}}$\;

\ForEach{$cr \in cl$}{
    \ForEach{$o_i \in cr$}{
        Compute $\mathrm{Dist}_\mathrm{quant}(q, o_i)$ and bucket id $a_i$\;
        \If{$a_i < \tau$}{
            \If{$a_i < \tau^{\mathrm{pred}}$}{
                Compute exact distance $\mathrm{Dist}_\mathrm{exact}(q, o_i)$\;
                Insert $o$ and $\mathrm{Dist}_\mathrm{exact}(q, o_i)$ into $B$\;
            }
            \Else{
                Insert $o$ and $\mathrm{Dist}_\mathrm{quant}(q, o_i)$ into $B$\;
            }
        }
    }
    Update $\tau^{\mathrm{pred}}$ and $\tau$\; 
}
Re-rank remaining candidates in $B$\;
\Return{\textnormal{top-$k$ results}}\;
\end{algorithm}

\revision{To address this limitation}, we propose a greedy re-ranking algorithm based on our result buffer, which significantly reduces the number of items to be re-ranked.  \revision{Figure~\ref{fig:qs-framework} provides a example, where we replace the two heaps with two result buffers, and only re-rank items within some boundary buckets, thereby enabling early termination of the re-ranking process.} Algorithm~\ref{alg:RabitQ-greedy} details our proposed re-ranking approach and
summarizes the enhanced search algorithm. Specifically, instead of using two heaps, we replace the two heaps with two result buffers that share the same codebook (lines 1–2). Next, we collect candidates based on the lower and upper bounds of $o_i$. In particular, when the object's upper bound lies within the top-$k$ upper bounds, the object is inserted into $C_u$; otherwise, if its lower bound falls below the threshold, it is inserted into $C_l$, as it may still qualify for the final top-$k$ results (lines 3-9). For the collected candidates, we first re-collect the falsely dropped candidates into $C_l$ from $C_u$ (line 11). Then we greedily re-rank all items in the marginal buckets, that is, the top bucket of $\mathrm{C}_l$ and the threshold bucket of $\mathrm{C}_u$, and insert the results into a new result buffer $\mathrm{C}_\mathrm{exact}$ for storing items with exact distance (lines 15-17). After each computation iteration, we clear the candidate buckets and  update the marginal buckets until $i \geq j$ (lines 18-21). When $i \geq j$, the upper bound of $\mathrm{C}_u$ is smaller than the lower bound of $\mathrm{C}_l$, indicating that no further potential candidates exist. Finally, we return the results in $\mathcal{C}_u \cup \mathcal{C}_\mathrm{exact}$ as the final results. The experimental results show that this method achieves near-minimal re-ranking reduction, leading to a substantial decrease in re-ranking time (as shown in Section~\ref{sec:qs-exp-results}).



\noindent\textbf{Integrating with {IVF+PQ}}. Due to the unbounded nature of the {PQ} algorithm, we cannot reduce the number of objects to be re-ranked. Therefore, we propose an early re-ranking algorithm built upon our result buffer to reduce the cache misses caused by the random memory access patterns of {IVF+PQ}, \revision{as illustrated in Figure~\ref{fig:qs-framework}. In particular, we optimize the memory data layout to store each object's PQ code and embedding contiguously. When estimating the distance of an object from its PQ code, we predict whether it will enter the re-ranking pool based on the estimated distance. If so, we immediately compute its exact distance to reduce L1 cache misses.}

Algorithm~\ref{alg:early-rerank-opq} details our proposed re-ranking approach and summarizes the improved search algorithm. During the sampling stage of codebook generation (line 2), we use the bucket containing $\left(\frac{|O_\text{sample}|}{|O|} \times n_{cand}\right)$-th quantized distance as the predicted threshold bucket $\tau^{\mathrm{pred}}$ (line 3). Then, during the scanning phase, for each object in a bucket preceding the threshold bucket, we compute the exact distance if its bucket ID $a_i < \tau^{\mathrm{pred}}$ and insert the exact value; otherwise, we insert its quantized distance (lines 5-13). After scanning each cluster, we update the predicted threshold $\tau^{\mathrm{pred}}$ using the $\left(\frac{|O_\text{scanned}|}{|O|} \times n_{cand}\right)$-th quantized distance and update the threshold bucket $\tau$ as described in Algorithm~\ref{alg:result-buffer-workflow} (line 14). This approach leads to a substantial reduction in L1 cache misses and re-ranking time, as verified in Exp-5. A 
concern is that 
$\tau^{\mathrm{pred}}$ might be too large, causing unnecessary re-ranking. In practice, this does not occur because clusters are traversed from nearest to farthest based on query-centroid distance (line 5), which produces a distribution skewed toward smaller values, thereby keeping $\tau^{\mathrm{pred}}$ low.

\noindent\revision{\textbf{Extension to different Metrics.} 
Our proposed result buffer applies to various similarity measures, including metric distances such as Euclidean distance, as well as cosine similarity and inner product, as these measures exhibit the distance concentration phenomenon in high-dimensional spaces~\cite{aggarwal2001surprising} and our result buffer does not rely on specific properties of metric distances. Similarly, the two re-ranking algorithms are compatible with any distance measure supported by PQ, RaBitQ, or other ANN techniques, as they do not depend on specific properties of metric distances as well.}

%% file: exp.tex
\begin{table}[t]
    \small
    \centering
    \caption{Dataset Statistics}
    \label{tab:qs-dataset}
    \vspace*{-1em}
    \renewcommand{\arraystretch}{1.2}
    \setlength{\arrayrulewidth}{0.8pt}
    \begin{tabularx}{0.45\textwidth}{
        >{\centering\arraybackslash}p{1.4cm}
        >{\centering\arraybackslash}X
        >{\centering\arraybackslash}p{0.8cm}
        >{\centering\arraybackslash}p{0.8cm}
        >{\centering\arraybackslash}p{1.2cm}
        }
        \toprule
        \textbf{Dataset} & $|D|$ & $d$ & $|Q|$ & {Size (GB)} \\
        \midrule
        {WiKi} & 10,000,000 & 1,536 & 1,000 & 58 \\        
        {C4} & 14,252,691 & 1,024 & 1,000 & 54  \\
        {MSMARCO} & 18,000,000 & 768 & 1,000 & 51 \\
        {Deep100M} & 100,000,000 & 96 & 10,000 & 35 \\        
        \bottomrule
    \end{tabularx}
    \vspace*{-1.3em}
\end{table}

\section{Experiments}
\label{sec:qs-exp}

\subsection{Evaluation Setup}
\label{sec:qs-exp-setup}

\noindent\textbf{Datasets.} We conduct experiments on {four} real-world datasets\footnote{\yin{Note that our techniques introduce negligible additional space overhead. Therefore, we defer the memory cost analysis to the technical report~\cite{fullversion}.}}. {Specifically, we evaluate our method and baselines on the 
Wiki, C4, {MSMARCO}, and {Deep100M} datasets. 
} 
The statistics of these datasets are listed in Table~\ref{tab:qs-dataset}. Details of each dataset are stated in the technical report due to page limitations~\cite{fullversion}.

\begin{figure*}[thbp]
\centering
\subcaptionbox{{{{Wiki}}}\label{fig:exp-wiki-1536}}{
\includegraphics[width=0.9\textwidth]{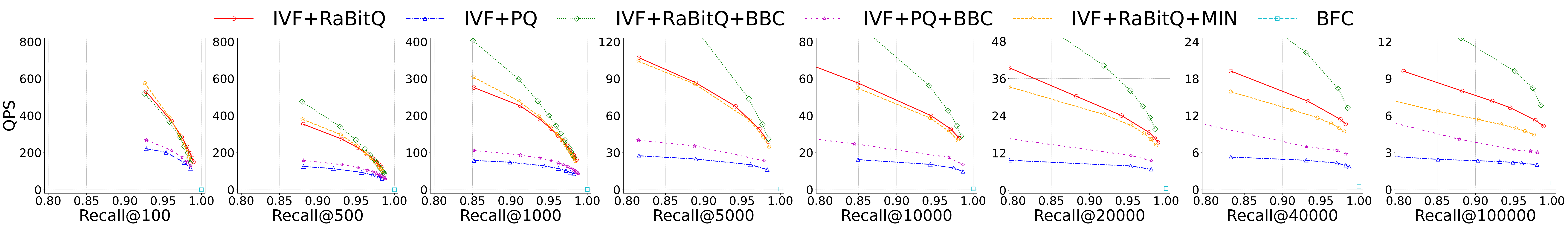}
}
\subcaptionbox{{{{C4}}}\label{fig:exp-c4-1024}}{
\includegraphics[width=0.9\textwidth]{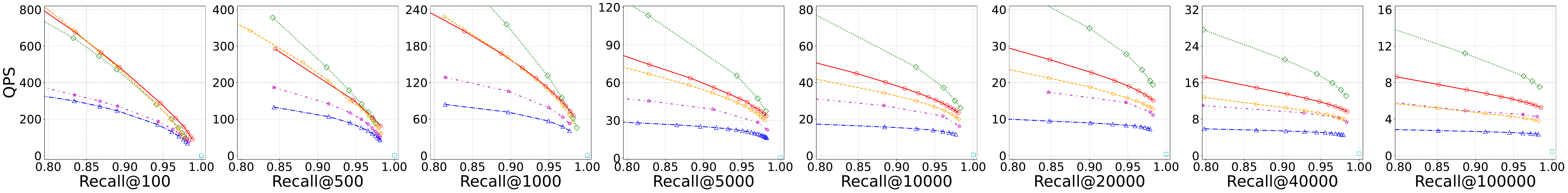}
}
\subcaptionbox{{{MSMARCO}}\label{fig:exp-marco}}{
\includegraphics[width=0.9\textwidth]{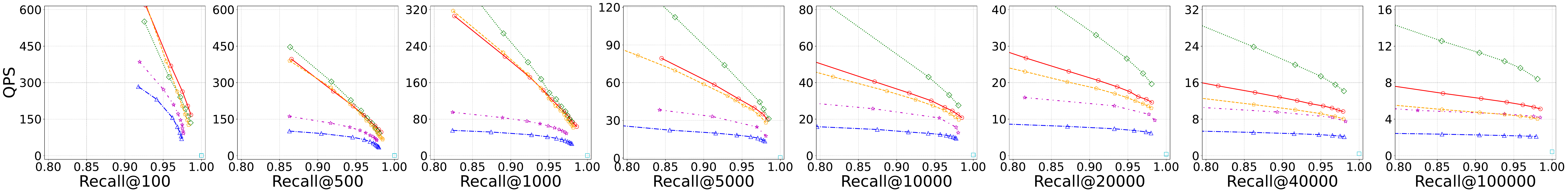}
}
\subcaptionbox{{{Deep100M}}\label{fig:exp-deep}}{
\includegraphics[width=0.9\textwidth]{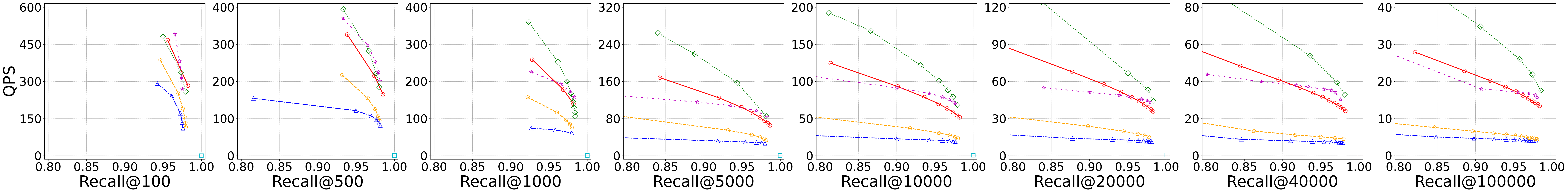}
}
\vspace*{-1em}
\caption{\revision{The accuracy-efficiency trade-off results under different $k$ (upper and right is better).}}
\vspace*{-1.5em}
\label{fig:exp-acc-eff}
\end{figure*}

\noindent\textbf{Baselines.} \underline{First}, 
we evaluate four representative ANN methods for large-$k$ ANN queries, \revision{as described in the Introduction and detailed in the technical report~\cite{fullversion}}. We integrate our proposed {\bbc} with existing quantization-based methods, {IVF+PQ} and {IVF+RabitQ}, yielding {IVF+PQ+\bbc} and {IVF+RaBitQ+\bbc}. We compare them with their original counterparts and also include the minimal re-ranking solution for {IVF+RaBitQ}, described in Section~\ref{sec:rerank} and denoted as {IVF+RaBitQ+MIN}, as a baseline. \revision{
We also include brute-force search, denoted as BFC, as a baseline for comparison.} {\underline{Second}}, to compare the efficiency of our proposed result buffer, denoted as {RB}, for collecting the top-$k$ results, we consider {four} baselines: the binary heap (denoted as {Heap}), {the cache-optimized d-ary Heap~\cite{johnson1975priority} (denoted as {d-Heap}), the sorted linear buffer used in~\cite{DBLP:journals/pvldb/AguerrebereBHTW23} (denoted as {Sorted}), }and the lazy update method (denoted as {Lazy}). In particular, Sorted maintains 
candidates in a sorted linear buffer, shifting the entire buffer upon each insertion. {Lazy} stores candidates whose distances to the query are below the current threshold in a linear buffer and updates both the buffer and the threshold using a 
{SIMD-optimized partial} sorting operation (e.g., 
{x86simdsort::qselect}), after each cluster is processed. 



\noindent\textbf{Evaluation metrics.} \underline{First}, for ANN query, 
we use recall rate recall$@k = \frac{\mathcal{R}\cap \tilde{\mathcal{R}}}{k}$~\cite{liApproximateNearestNeighbor2020,wang2021comprehensive} to evaluate the accuracy of search results and queries per second QPS $= \frac{|Q|}{t}$~\cite{fu2019fast} to evaluate the search's efficiency. Here, $\mathcal{R}$ represents the result retrieved by the \revision{ANN} index, $\tilde{\mathcal{R}}$ denotes the ground-truth result computed by the brute-force search. QPS$ = \frac{|Q|}{t}$~\cite{fu2019fast} is the ratio of the number of queries ($|Q|$) to the total search time ($t$), representing the number of queries processed per second. \revision{We also report the relative error between the distances of approximate nearest neighbors in $\mathcal{R}$ and true nearest neighbors in $\tilde{\mathcal{R}}$ to assess the quality of the retrieved results. 
The relative error~\cite{liApproximateNearestNeighbor2020,sun2014srs} is computed as $\frac{1}{k}\sum_{i=0}^{k}(\frac{\mathrm{Dist}(q, \mathcal{R}[i])}{\mathrm{Dist(q, \tilde{\mathcal{R}}[i])}}-1)$, where $q$ denotes the query, $\tilde{\mathcal{R}}[i]$ is the $i$-th true nearest neighbor of $q$ in $\tilde{\mathcal{R}}$, and $\mathcal{R}[i]$ is the $i$-th approximate nearest neighbor retrieved by the ANN algorithm.} 
\underline{Second}, to compare our proposed result buffer with {alternative approaches,} 
we evaluate the time overhead (milliseconds) of collecting top-$k$ results under varying dataset sizes and different values of $k$ using VTune profiling. 

\noindent\textbf{Parameter Settings.} Following the suggestion from Faiss~\cite{johnsonBillionscaleSimilaritySearch2019}, the number of clusters for {IVF}, {IVF+RaBitQ}, and {IVF+PQ} is set to approximately $\sqrt{|D|}$, which is 4,096 in our experiments. For {IVF+RaBitQ}, we use the default quantization parameters from the original paper~\cite{gao2024rabitq},  $\epsilon_0 = 1.9$ and $B_q = 4$. For {IVF+PQ}, the number of sub-vectors $M$ is set to  $\frac{d}{4}$, and the number of bits per sub-vector is set to 4, {resulting in $B = d$}, following the original settings~\cite{DBLP:conf/mir/AndreKS17,johnsonBillionscaleSimilaritySearch2019, DBLP:conf/icml/GuoSLGSCK20}. For {HNSW}, during indexing, we set the candidate list size $ef_{\text{construction}}$ and the maximum number of edges per node $M$ to 200 and 32, respectively, following previous studies~\cite{malkovEfficientRobustApproximate2020}. To evaluate the capability of methods in handling large-$k$ ANN queries under different $k$, we vary $k$ from 5,000 to 100,000, using five representative values (5,000, 10,000, 20,000, 40,000, and 100,000), which are widely used in practical scenarios~\cite{lee2023rethinking,gao2021learning}. Additionally, we also present experimental results for $k$ values ranging from \revision{100 to 1,000}, showing that {\bbc} does not slow down existing methods for small $k$. At query time, {IVF} and {IVF+RaBitQ} vary the number of clusters for routing $n_{probe}$ from 10 to 1,200 and {IVF+PQ} use the same $n_{probe}$ as {IVF+RaBitQ}. For {IVF+PQ}, for each dataset and given $k$, the $n_{cand}$ parameter is then fine-tuned to maximize QPS at a target recall of 0.95, under the constraint that the configuration must also be capable of achieving 0.98 recall, which is listed in the full version due to page limitations~\cite{fullversion}. For {HNSW}, we increase $ef_{\text{search}}$ from $k$ in increments of $\frac{k}{2}$.  
For {\bbc}, \yin{since our CPU, like most modern CPUs, has an L1 cache of $C_{L1} = 32$ KB, we set the number of buckets $m$ according to Equation~\ref{eq:select-m}: 56 for Wiki, 
80 for 
C4, 
92 for 
{MSMARCO}, and 120 for {Deep100M}.}



\noindent\textbf{Implementations.} The baselines and our method 
are all implemented in C++. First, 
we use the hnswlib implementation~\cite{nmslib_hnswlib}, {a widely adopted industry-standard library}, for {HNSW}. 
For {IVF+RaBitQ}, we adopt its open-source implementation~\cite{gao2024rabitq}; and for {IVF} and {IVF+PQ}, we implement these methods based on {IVF+RaBitQ} because they share a common index structure. 
{Second}, for {Heap}
, we use the STL implementation. {For {d-Heap}, we use the Boost Library. For {Sorted}, we use its official implementation. For {Lazy}, we use the x86simdsort library. } 
All experiments are run on a machine equipped with an AMD Ryzen Threadripper PRO 5965WX 7.0GHz processor (
supporting 
AVX2) and 128 GB of RAM. 

\subsection{Experimental Results} \label{sec:qs-exp-results}

\begin{figure*}[!t]
\centering
\includegraphics[width=0.7\textwidth]{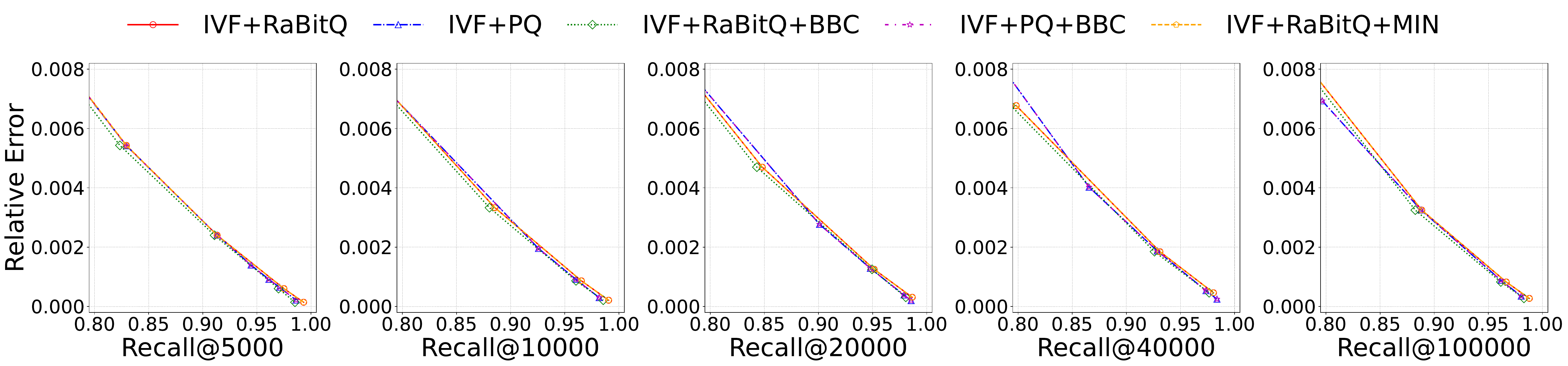}
\vspace*{-1em}
\caption{\revision{Relative error of large-$k$ ANN queries on the C4 dataset (lower is better).}}
\vspace*{-1em}
\label{fig:exp-error}
\end{figure*}

\begin{figure*}[!ht]
\centering
\includegraphics[width=0.7\textwidth]{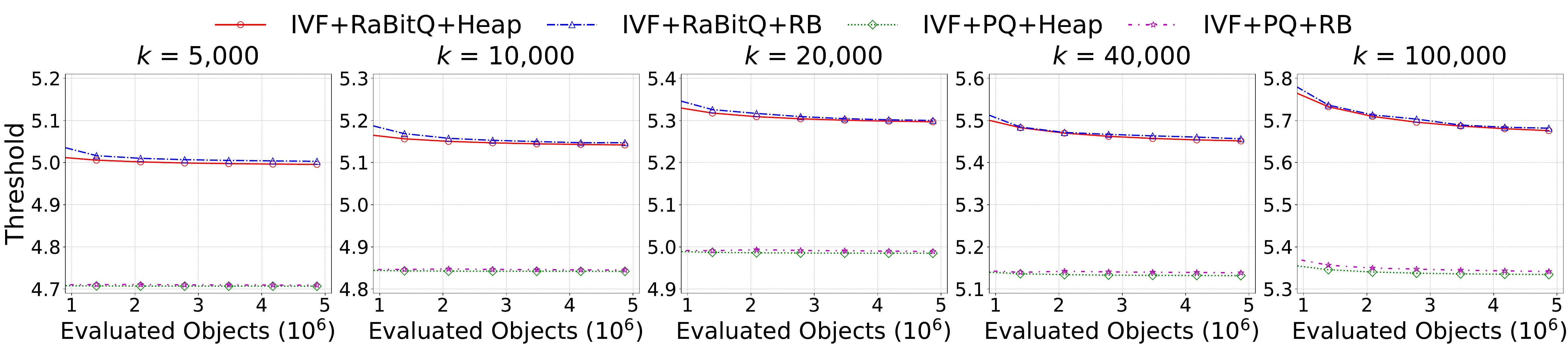}
\vspace*{-1em}
\caption{{Comparison of threshold values generated by the result buffer and binary heap on the {C4} Dataset.}}
\vspace*{-1em}
\label{fig:exp-threshold}
\end{figure*}

\begin{figure*}[!t]
\centering
\includegraphics[width=0.7\textwidth]{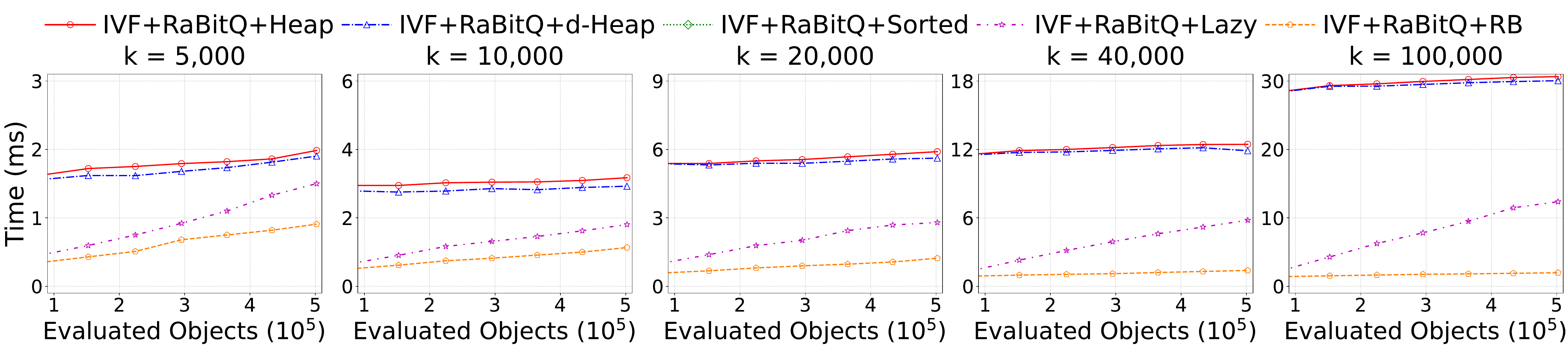}
\vspace*{-1em}
\caption{{Overhead of top-$k$ collectors under varying numbers of evaluated objects on the {Deep100M} dataset.}}
\vspace*{-1em}
\label{fig:exp-eff-topk-collector}
\end{figure*}

\noindent{\textbf{Exp-1: ANN Query Performance.}} 
The 
querying results over the {four} datasets are shown in Figure~\ref{fig:exp-acc-eff}. We have the following observations: \textbf{(1) 
{\bbc} achieves {1.4}$\times$--3.8$\times$ speedup over existing quantization-based methods for large-$k$ ANN queries.} 
Specifically, on the {Deep100M} dataset, when $k$ = 100,000 and recall@$k$ = 0.95, {IVF+PQ+{\bbc}} requires 61 ms per query, compared to 233 ms per query for {IVF+PQ}, achieving a 3.8$\times$ speedup. When $k = 5{,}000$, on the 
{{Wiki}} dataset at recall@$k$ = 0.98, {IVF+{PQ+\bbc}} achieves a {1.4}$\times$ acceleration over {IVF+{PQ}} (42 ms vs. 61 ms). 
The gain stems from the high efficiency of our proposed result buffer and the effectiveness of the newly designed re-ranking algorithm. 
\textbf{(2) The acceleration provided by {\bbc} becomes more significant as $k$ increases.} 
For example, on {Deep100M} at recall@$k$ = 0.95, {as $k$ increases from 5,000 to 100,000}, the acceleration ratio of {IVF+PQ+{\bbc}} over {IVF+PQ} increases from 2.9$\times$ at $k=5{,}000$ (2.8 ms vs. 8.2 ms) to 3.8$\times$ at $k=100{,}000$ (61 ms vs. 233 ms). This is because, as $k$ increases, existing collectors and re-ranking algorithms incur significantly higher costs, whereas our proposed result buffer remains efficient ({Exp-3}), and the newly designed ranking algorithm can further reduce the re-ranking cost substantially ({Exp-5}).
\textbf{(3) When $k$ is large, {IVF+RaBitQ+MIN} 
is significantly slower than {IVF+RaBitQ+{\bbc}}, and even slower than {IVF+RaBitQ}.} Across the four datasets, {IVF+RaBitQ+MIN} performs worse than 
{IVF+RaBitQ} and {IVF+RaBitQ+{\bbc}} at $k$ = 5,000 and 
the performance gap 
widens considerably as $k$ increases. This can be attributed to the high cost of heap operations, which outweighs the efficiency gained from re-ranking fewer objects and leads to higher overhead when $k$ is large, consistent with our previous analysis. \revision{\textbf{(4) {IVF+RaBitQ+{\bbc}} and {IVF+PQ+{\bbc}} are comparable to {IVF+RaBitQ} and {IVF+PQ} at $k = 100$, while outperforming them when $k \geq 500$, with the performance gap widening as $k$ increases.} This suggests {\bbc} 
should be used when $k \geq 500$ to improve query performance. \textbf{(5) Brute-force search is significantly slower than existing ANN indexes.} For example, on the Wiki dataset, at recall$@100,000 = 0.98$, BFC takes 1.8s per query, whereas the slowest ANN method, IVF+PQ, requires only 0.5s.}

\noindent\revision{\textbf{Exp-{2}: Relative error for large-$k$ ANN queries.} 
Figure~\ref{fig:exp-error} reports the relative error of large-$k$ ANN queries on the C4 dataset, and similar trends are observed across the other datasets. The results show that \textbf{(1) As recall increases, the relative error drops sharply from 0.3\% (recall@5,000 = 0.89) to 0.01\% (recall@5,000 = 0.98)}, and \textbf{(2) the relative error remains nearly constant across different $k$.} These  indicate that retrieved results remain highly accurate for large-$k$ ANN queries, regardless of the value of $k$.}

\noindent{\textbf{Exp-\revision{3}: Latency of Top-$k$ Collectors.}} We evaluate the time overhead of our proposed result buffer, denoted as {RB}, 
and compare it with {four} baselines: {Heap}, {{d-Heap}, {Sorted},} and {Lazy}. Using quantized distances from {IVF+RaBitQ} 
as keys, we use these methods to gather top-$k$ results, with $k$ ranging from 5,000 to 100,000, under varying numbers of evaluated objects. The number of evaluated objects is controlled by 
$n_{probe}$, which is tuned as in the previous experiment. The {isolated} time overhead of these methods is measured by VTune. Due to page limitations, we present the results on {Deep100M} in Figure~\ref{fig:exp-eff-topk-collector} {for {IVF+RaBitQ}}, while similar trends are observed on other datasets {and {IVF+PQ}}. 
We find that: \textbf{(1) Our result buffer {RB} is significantly faster than existing collectors, achieving up to an order of magnitude improvement.} For example, on the {Deep100M} dataset, when $k$ = 100{,}000 and $n_{probe}$ = 210, for {IVF+RaBitQ}, 
{RB} takes only {2.0} ms, compared to {30.6} ms for {Heap} and {12.3} ms for {Lazy}, achieving an order-of-magnitude speedup. \textbf{(2) {RB} remains efficient at higher values of $k$, while {Lazy} and {Heap} face significant degradation.} For example, on  {Deep100M}, when $k$ = 5{,}000 and $n_{probe}$ = 210, for {IVF+RaBitQ}, {RB} takes {0.9} ms, compared to {1.5} ms for {Lazy} and 1.9 ms for {Heap}. When $k$ = 100{,}000 and $n_{probe}$ = 210, {RB} takes {2.0} ms, while {Lazy} takes {12.3} ms and {Heap} takes {30.6} ms. This efficiency is due to {RB} avoiding maintaining an exact top-$k$ set throughout, which is consistent with previous experimental results. \revision{We further report the overall time for top-$k$ collection, including collectors, quantized distance computation, and other operations, in the technical report~\cite{fullversion} due to page limitations. We also include the total L1 cache misses during top-$k$ collection in the technical report, measured using perf profiling to validate that our speedup stems from reduced cache misses.}

\noindent{\textbf{Exp-\revision{4}: Gap between the relaxed and exact thresholds.}} We compare the threshold values generated by our proposed result buffer with those produced by the binary heap, thereby empirically validating the effectiveness of the equal-depth method. 
The experimental results on the {C4} dataset are shown in Figure~\ref{fig:exp-threshold}. \textbf{It indicates that the gap between the relaxed and exact threshold values 
remains consistently small, with differences on the order of $10^{-2}$, {thereby validating the effectiveness of Theorem~\ref{th:error}. 
}}

\begin{figure*}[!t]
\centering
\subcaptionbox{{Re-ranking Time}}{
\includegraphics[width=0.7\textwidth]{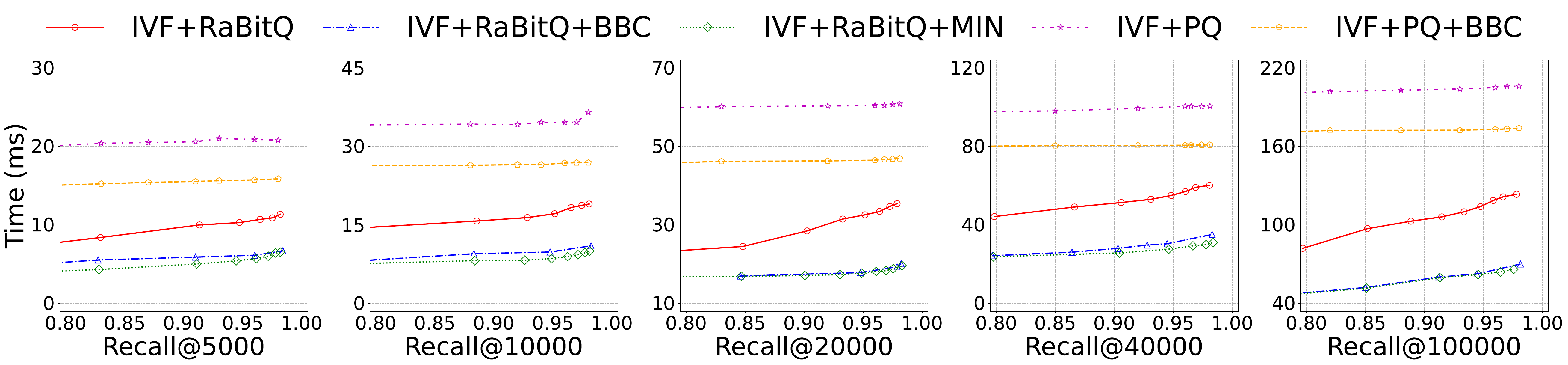}
}
\subcaptionbox{{Number of Re-ranked Objects}}{
\includegraphics[width=0.7\textwidth]{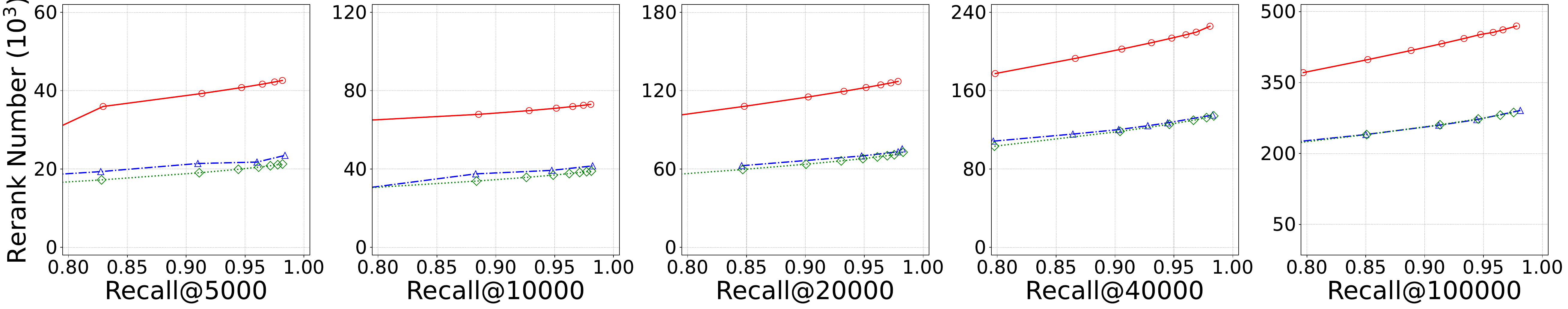}
}
\vspace*{-1em}
\caption{{Re-ranking Time and Number of Re-ranked Objects on the {C4} Dataset}}
\vspace*{-1em}
\label{fig:exp-eff-rerank}
\end{figure*}

\noindent{\textbf{Exp-\revision{5}: Comparison of Re-ranking Algorithms.}} We compare our proposed re-ranking algorithm with the naive algorithm, evaluating both their time overhead and the number of items re-ranked. In particular, Figure~\ref{fig:exp-eff-rerank} presents the re-ranking times and the number of re-ranked objects for {IVF+PQ}, {IVF+PQ+{\bbc}}, {IVF+RaBitQ}, and {IVF+RaBitQ+{\bbc}} on the {C4} dataset, based on VTune profiling results. The parameter $n_{probe}$ varies as in the previous experiments. 
We report recall@$k$ and the corresponding re-ranking times/re-ranked objects for $k$ = $5{,}000$, $10{,}000$, $20{,}000$,  $40{,}000$, and $100{,}000$. 
The key observations are as follows: \textbf{(1) Our re-rank algorithms significantly accelerate the re-ranking process.} Specifically, when $k$ = 20{,}000 and $\text{recall}@k$ = 0.95, {IVF+RaBitQ+{\bbc}} requires about 32 ms per query, whereas {IVF+PQ} takes around 18 ms, achieving a 1.8$\times$ speedup. Similarly, when $k$ = 20{,}000 and $\text{recall}@k$ = 0.95, {IVF+PQ+{\bbc}} requires approximately 45 ms per query for re-ranking, compared to about 60 ms for {IVF+PQ}, resulting in a 1.3$\times$ speedup. The speedup of {IVF+RaBitQ+{\bbc}} mainly results from a reduction in the number of re-ranked items, while the speedup of {IVF+PQ+{\bbc}} can be attributed to reduced cache misses, as detailed below. 
\textbf{(2) The number of re-ranked objects in {IVF+RaBitQ+{\bbc}} is significantly reduced compared to {IVF+RaBitQ}.} Specifically, when $k$ = 100{,}000 and $\text{recall}@k$ = 0.95, {IVF+RaBitQ} re-ranks 450,067 objects, whereas {IVF+RaBitQ+{\bbc}} re-ranks 223,142 objects, representing a reduction of nearly 50\%. This result is consistent with the 1.8$\times$ speedup observed above. Notably, the number of re-ranked objects in {IVF+RaBitQ+{\bbc}} is only slightly higher than that in the minimal re-ranking scenario of {IVF+RaBitQ+MIN}, demonstrating the effectiveness of our method. {\textbf{(3) The L1 cache miss count of {IVF+PQ+BBC} is significantly lower than that of {IVF+PQ}.} As shown in Table~\ref{tab:cache_2}, when $k = 100,000$, the L1 cache miss count is reduced from 7.61 $\times 10^5$ to 5.23 $\times 10^5$, corresponding to a 1.45$\times$ reduction, which demonstrates the effectiveness of Algorithm~\ref{alg:early-rerank-opq}.}

\begin{table}[t]
\centering
\small
\renewcommand{\arraystretch}{1.1}
\setlength{\arrayrulewidth}{0.8pt}
\caption{{L1 cache miss counts ($10^5$) of different re-ranking methods under varying values of $k$ when $n_{probe} = 500$.}}
\vspace*{-0.5em}
\label{tab:cache_2}
\begin{tabularx}{0.45\textwidth}{
    >{\centering\arraybackslash}p{1.4cm}
    *{5}{>{\centering\arraybackslash}X}
}
\toprule
${k}$ 
  & $5{,}000$ & $10{,}000$ & $20{,}000$ & $40{,}000$ & $100{,}000$ \\
\midrule
{IVF+PQ} & 1.03 & 1.74 & 3.23 & 5.42 & 7.61 \\
{IVF+PQ+BBC} &0.83 & 1.37 & 2.59 & 3.61 & 5.23 \\
\bottomrule
\end{tabularx}
\end{table}

\begin{table}[t]
\centering
\small
\renewcommand{\arraystretch}{1.1}
\setlength{\arrayrulewidth}{0.8pt}
\caption{{Top-$k$ collection time (ms) under varying values of $m$ on the {C4} dataset ($n_{probe}$ = 90)}}
\vspace*{-0.5em}
\label{tab:sensitivity}
\begin{tabularx}{0.45\textwidth}{
    >{\centering\arraybackslash}p{1cm}
    *{5}{>{\centering\arraybackslash}X}
}
\toprule
${m \backslash k}$ 
  & $5{,}000$ & $10{,}000$ & $20{,}000$ & $40{,}000$ & $100{,}000$ \\
\midrule
8 & 4.74 & 5.22 & 5.65& 5.91& 6.28\\
32 & 3.59 & 3.74& 3.99& 4.28 & 4.91 \\
80 & 3.50 & 3.68& 3.82& 4.14 & 4.81 \\
128 & 3.58 & 3.72 & 3.89 & 4.16 & 4.87 \\
256 & 3.70& 3.74 & 3.91 & 4.28 & 5.00 \\
\bottomrule
\end{tabularx}
\end{table}

\noindent{\textbf{Exp-\revision{6}: Parameter Sensitivity Study.} We evaluate the impact of the number of buckets $m$ following the procedure in Exp-3. Specifically, we measure the top-$k$ collection latency of {IVF+RaBitQ+BBC} on the {C4} dataset with $n_{probe} = 90$ and vary $m$ from 8 to 256. 
The experimental results are shown in Table~\ref{tab:sensitivity}, where $m=80$ is the optimal number of buckets as determined by Equation~\ref{eq:select-m}. The results indicate that (1) the value of $m$ computed by Equation~\ref{eq:select-m} achieves the lowest latency; (2) small deviations from this value result in only marginal latency increases; (3) very small values of $m$ cause a substantial latency increase because objects become concentrated in a few buckets, making the final selection costly; and (4) very large values of $m$ also increase latency due to more frequent L1 cache misses induced by the large $m$
.}

%% file: related.tex
\section{{Related Work}}
\label{sec:related}


\noindent\textbf{Approximate Nearest Neighbor Search.}  
%
Various Approximate Nearest Neighbor (ANN) search methods
have been proposed~\cite{patella2009approximate, pan2024vector, ite_matsui_2018,liApproximateNearestNeighbor2020}, which are typically classified into four categories: graph-based methods~\cite{fu2019fast,DBLP:journals/pami/FuWC22,DBLP:journals/is/MalkovPLK14,malkovEfficientRobustApproximate2020,harwood2016fanng,wang2021comprehensive,azizi2025graph, yin2026distvs,yin2026gps,gui2026pilotann}, quantization methods~\cite{DBLP:journals/pami/GeHK014,gong2013iterative,jegouProductQuantizationNearest2011,DBLP:conf/eccv/MartinezZHL18,DBLP:conf/icde/PaparrizosELEF22,DBLP:conf/mm/TuncelFR02,DBLP:conf/icml/ZhangDW14,gao2024rabitq, rabitq2, blink2024,DBLP:conf/cvpr/BabenkoL14,DBLP:conf/vldb/WeberSB98,DBLP:conf/sc/JiangLZLHSRZRHA23,ferhatosmanoglu2000vector}, hashing-based methods~\cite{nagarkar2018pslsh,LSHDatarIIM04, wangSurveyLearningHash2018, DBLP:journals/pami/HeoLHCY15, DBLP:journals/tods/TaoYSK10, DBLP:journals/tkde/TianZZ24, DBLP:journals/pvldb/LuWWK20, DBLP:journals/pvldb/HuangFZFN15,DBLP:conf/mm/TuncelFR02,DBLP:conf/sigmod/GanFFN12,DBLP:conf/sigmod/LeiHKT20, DBLP:conf/sigmod/LiYZXCLNC18,pham2022falconn}, and tree-based methods~\cite{yianilos1993data,arora2018hdindex, BeygelzimerKL06Covertree, RamS19kdtree,ciaccia1997m}. 
Among 
these methods, IVF and graph-based indexes are widely used in industry 
~\cite{pan2024vector,DBLP:conf/sigmod/LiZAH20} and quantization methods have proven highly effective in {saving memory and} accelerating {query processing}~\cite{yahoojapan2018ngt,DBLP:conf/icml/GuoSLGSCK20,DBLP:conf/mir/AndreKS17,gao2024rabitq}. 
For a comprehensive overview, we refer readers to recent tutorials~\cite{DBLP:journals/pvldb/EchihabiPZ21}, benchmark/experimental evaluations~\cite{simhadri2022results, DBLP:journals/is/AumullerBF20, DBLP:journals/debu/0001C23, dobson2023scaling, DBLP:journals/debu/00070P023}, and surveys~\cite{pan2024vector, DBLP:journals/vldb/PanWL24, ite_matsui_2018} for details. 
Although ANN queries have been extensively studied, to the best of our knowledge, the large-$k$ ANN query studied in this work has not yet been specifically investigated. \remove{In addition, Large-$k$ ANN queries 
cannot be effectively solved with range queries. 
In practice, range queries in vector search remain underexplored and technically difficult to utilize
for two reasons: (1) lack of intuition and semantic meaning of similarity ranges. In high-dimensional spaces, the semantic meaning of a similarity threshold (e.g., 0.9) is opaque: the same threshold may return few results for some queries, indicating high similarity, but hundreds of thousands objects for other queries, indicating lower similarity, making it difficult to specify an effective threshold; and 
(2) high uncertainty in result cardinality: due to the distance concentration phenomenon, the distances from a query vector to data vectors often lie within a very narrow range (e.g., 0.9–1.0), so even a small change in the threshold can result in a dramatic change in the number of returned objects, which makes the results challenging to use effectively.}






\noindent\textbf{Quantization.} We focus on improving quantization based methods. The quantization of high-dimensional vectors has been extensively explored in the literature~\cite{matsui2018reconfigurable,matsui2015pqtable,DBLP:conf/icml/GuoSLGSCK20,DBLP:journals/pami/GeHK014,gong2013iterative,jegouProductQuantizationNearest2011,DBLP:conf/icde/PaparrizosELEF22,DBLP:conf/mm/TuncelFR02,DBLP:conf/icml/ZhangDW14,ite_matsui_2018,gao2024rabitq, rabitq2, blink2024,DBLP:conf/cvpr/BabenkoL14,DBLP:conf/vldb/WeberSB98,DBLP:conf/sc/JiangLZLHSRZRHA23}. {
Early research on quantization focuses on reducing quantization error, with Product Quantization (PQ)~\cite{jegouProductQuantizationNearest2011,DBLP:journals/pami/GeHK014} as the representative 
method. 
Recently, 
RaBitQ~\cite{gao2024rabitq,rabitq2} is proposed, which provides theoretical 
bounds for estimated distances. 
With the help of SIMD-based implementations (a.k.a. FastScan), these methods have achieved great success in accelerating other ANN approaches~\cite{DBLP:journals/pvldb/AndreKS15,yahoojapan2018ngt,DBLP:conf/icml/GuoSLGSCK20,gao2024rabitq,rabitq2,DBLP:journals/pacmmod/GouGXL25}. In this study, we integrate our proposed {\bbc} into \textsf{IVF+PQ} and {IVF+RaBitQ} to demonstrate its plug-and-play ability to improve the efficiency of quantization methods for 
large-$k$ ANN queries. {There are methods that integrate quantization and graph-based approaches. However, these early explorations exhibit limited scalability. For instance, SymphonyQG~\cite{DBLP:journals/pacmmod/GouGXL25} and NGT-QG~\cite{yahoojapan2018ngt} encounter out-of-memory errors during indexing on our system due to their substantial memory consumption, which exceeds the available capacity. This is consistent with prior experimental findings~\cite{DBLP:journals/pacmmod/GouGXL25}. Moreover, LVQ~\cite{DBLP:journals/pvldb/AguerrebereBHTW23} is closed-source. Therefore, we leave the integration of \yin{\bbc} with these methods for future work.}

\noindent\textbf{Priority Queue.} Priority queues have been extensively studied, with the binary heap and its variants being the most common implementations due to their $O(\log(n))$ insertion and deletion time complexity~\cite{williams1964algorithm}. However, in large-$k$ ANN queries, this theoretical efficiency is no longer effective, as L1 cache misses 
dominate 
the runtime overhead. This observation is consistent with previous experimental evaluations~\cite{larkin2014back}, which report a strong correlation between the priority queue's processing time and L1 cache miss rates. Consequently, studies that focus on optimizing the theoretical time complexity of priority queues~\cite{pugh1990skip,vuillemin1978data,fredman1987fibonacci}, such as the Fibonacci heap, offer limited benefits in our context. It is noted that \cite{DBLP:journals/pvldb/AguerrebereBHTW23} replaces the heap with a sorted linear buffer.  
Each insertion locates the proper position and shifts all subsequent elements backward to maintain order. The linear buffer's sequential layout facilitates hardware prefetching, thereby reducing L1 cache misses and outperforming the heap when $k$ is small~\cite{DBLP:journals/pvldb/AguerrebereBHTW23}. However, its advantage vanishes with larger $k$ due to the $O(k)$ insertion cost. Several top-k collectors are designed for GPUs~\cite{sismanis2012parallel,tang2015efficient,johnsonBillionscaleSimilaritySearch2019}, such as FAISS's WarpSelect. However, as reported in~\cite{johnsonBillionscaleSimilaritySearch2019}, these 
methods also face performance degradation when $k$ is large.



%% file: conclusion.tex
\section{Conclusions and Future Directions}

In this paper, we propose a novel bucket-based result collector (\yin{\bbc}) to accelerate quantization-based methods for large-$k$ ANN queries, which consists of two key components: a bucket-based result buffer and two re-ranking algorithms. One potential future direction is to modify the \yin{\bbc} to support graph-quantization methods for large-$k$ ANN queries.\remove{Graph-based methods typically employ greedy beam search to retrieve top-$k$ results. Starting from an entry point, the search iteratively explores the neighbors of the currently closest node, where a min-heap is used to dynamically maintain the nearest candidates. When $k$ is large, the min-heap inevitably suffers from frequent L1 cache misses and increased latency. Our proposed \bbc can address this by maintaining only a small heap in the nearest bucket.}
\yin{Another promising direction is adapting \bbc for GPU settings to accelerate batch large-$k$ ANN queries. 
} 

%% file: appendix.tex
\newpage
\appendix

\section{The Correctness Proof of Algorithm~\ref{alg:RabitQ-optimal}}

\begin{proof}
\label{pf:rabitq-optimal}
As described in Algorithm~\ref{alg:RabitQ-optimal}, after the scanning phase (lines 2-11), we obtain two heaps: a max-heap ${H}_u$ of size $k$ and a min-heap ${H}_l$. Clearly, the top-$k$ results returned by the \textsf{RaBitQ} must be contained in ${H}_u$ or ${H}_l$. There are two possible cases:
\begin{enumerate}
    \item \textbf{${H}_u$ are the final top-$k$ results.} In this scenario, the object at the top of ${H}_u$ cannot be avoided for re-ranking. This is because its upper bound is the $k$-th largest and thus greater than $\mathrm{Dist}_k$ (the exact distance of the $k$-th object), while its lower bound must be less than $\mathrm{Dist}_k$ since it is included in the top-$k$ results.

    \item \textbf{Some objects in ${H}_l$ may enter the final top-$k$.} In this case, the object at the top of ${H}_l$ must also be re-ranked. This is because its lower bound is smaller than those of other objects in ${H}_l$ that enter the top-$k$, meaning its lower bound is less than $\mathrm{Dist}_k$. At the same time, its upper bound exceeds the $k$-th upper bound, and thus is also greater than $\mathrm{Dist}_k$.
\end{enumerate}
Therefore, in both cases, at least one of the objects at the top of ${H}_u$ or ${H}_l$ must be re-ranked. We further prove that the object with the smaller lower bound between the two is guaranteed to require re-ranking. Similarly, there are two cases to consider. 

\begin{enumerate}
    \item \textbf{If the object at the top of ${H}_u$ has the smaller lower bound, it must be re-ranked.} If it belongs to the top-$k$, it must be re-ranked as discussed above. If it does not belong to the top-$k$, its lower bound remains smaller than that of the top object in ${H}_l$, and its upper bound is the $k$-th upper bound; therefore, it must also be re-ranked.
    \item \textbf{If the object at the top of ${H}_l$ has the smaller lower bound, it is also unavoidable for re-ranking.} This is because its predicted distance interval entirely covers that of the object at the top of ${H}_u$, and thus intersects with $\mathrm{Dist}_k$.
\end{enumerate}
The above consideration is based on the case where neither of the top objects from the two heaps has been re-ranked. We now consider the case where the top object in $H_u$ has already been evaluated.

\begin{enumerate}
    \item \textbf{${H}_u$ are the final top-$k$ results.} In this scenario, the distance between the object at the top of $H_u$ and the query is $\mathrm{Dist}_k$. If the lower bound of the top of ${H}_l$ is smaller than $\mathrm{Dist}_k$, it needs to be re-ranked.

    \item \textbf{Some objects in \( H_l \) may enter the final top-\( k \).} In this case, the object at the top of \( H_l \) must also be re-ranked, as in the previous situation.
\end{enumerate}

\end{proof}

\begin{figure*}[!t]
\centering
\includegraphics[width=0.9\textwidth]{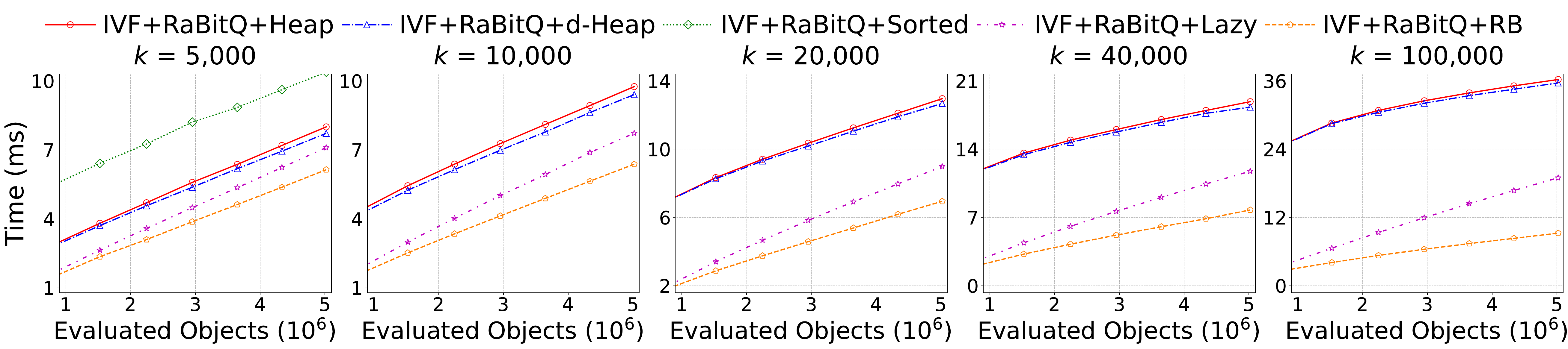}
\caption{{Top-$k$ collection time under different numbers of evaluated objects on the {Deep100M} dataset.}}
\label{fig:exp-eff-topk}
\end{figure*}


\section{Proof of Theorem~\ref{th:error}}
\begin{table*}[t]
    \centering
    \caption{{The $n_{\text{cand}}$ parameter of IVF+PQ across different datasets and $k$ settings.}}
    \vspace*{-1em}
    \label{tab:numcand}
    \renewcommand{\arraystretch}{1.2}
    \setlength{\arrayrulewidth}{0.8pt}
    \begin{tabularx}{0.8\textwidth}{
         *{5}{>{\centering\arraybackslash}X}
    }
        \toprule
        \textbf{$k$} & {WiKi} & {C4} & {MSMARCO} & {Deep100M} \\
        \midrule
        10 & 500 & 500 & 600 & 1,000 \\
        100 & 2,500 & 2,500 & 3,000 & 6,000 \\
        500 & 8,000 & 8,000  & 20,000 & 20,000 \\
        1,000 & 15,000 & 15,000 & 40,000 & 40,000 \\
        2,500 &30,000 & 30,000 & 50,000 & 60,000 \\
        5,000 & 50,000  &50,000 & 70,000 & 100,000 \\
        10,000 & 80,000 & 80,000 & 100,000 & 140,000 \\
        20,000 & 150,000 &140,000 & 180,000 & 240,000 \\
        40,000 & 240,000 &240,000 & 280,000 & 320,000 \\
        100,000 & 500,000&500,000 & 600,000 & 700,000 \\
        \bottomrule
    \end{tabularx}
\end{table*}


\begin{proof}
We define the quantized variable over the interval $[0, \mu]$ by
\[
\widehat R = \sum_{i=1}^m b_{i+1} \,\mathbf 1\{R \in (b_{i},b_{i+1}]\}.
\]
By this definition, whenever $R \in (b_i, b_{i+1}]$, $\widehat R$ takes the upper boundary $b_{i+1}$, which guarantees that $R \le \widehat R$ within this domain.

\medskip
\noindent\textbf{Step 1. Integral representation of the error.}
The expected absolute quantization error over $[0, \mu]$ can be expressed as the area between the cumulative distribution functions (CDFs) of $R$ and $\widehat R$. 
Letting $\Delta(x) = \mathbb{P}(R \le x) - \mathbb{P}(\widehat R \le x)$ denote the CDF difference, we obtain:
\[
\mathbb E|R-\widehat R| = \int_{0}^{\mu} \Delta(x) \,dx.
\]
Note that $\mathbb{P}(\widehat{R} \le x) \le \mathbb{P}(R \le x)$, leading to $\Delta(x) \ge 0$ for all $x$. Moreover, $\Delta(x) \le \mathbb{P}(R \le x) \le 1$.

\medskip
\noindent\textbf{Step 2. Partitioning the integration region.}
To bound the integral, we partition the domain $[0, \mu]$ into two distinct regions: $[0, z]$ and $[z, \mu]$. We choose the transition point $z$ according to the sub-Gaussian concentration of the left tail:
\[
z = \mu - \sqrt{\frac{\log(2 c_1 m)}{c_0 d}}.
\]
The total quantization error is simply the sum of the integral contributions from these two regions:
\[
\int_{0}^{\mu} \Delta(x)\,dx = \int_{0}^{z} \Delta(x)\,dx + \int_{z}^{\mu} \Delta(x)\,dx.
\]


\medskip
\noindent\textbf{Step 3. Bounding the error on $[0, z]$.}
For the first interval $[0, z]$, we drop the negative term in $\Delta(x)$ to use the upper bound $\Delta(x) \le \mathbb{P}(R \le x)$. By the given sub-Gaussian left tail assumption, for any $x \le \mu$, we have $\mathbb{P}(R \le x) \le 2c_1 \exp(-c_0 d (\mu-x)^2)$. Integrating this over $[0, z]$ gives:
\[
\int_{0}^{z} \Delta(x) \,dx \le \int_{0}^{z} 2c_1 \exp\bigl(-c_0 d (\mu-x)^2\bigr) \,dx.
\]
We can further upper-bound this by extending the integration domain to $(-\infty, \mu]$:
\[
\int_{0}^{z} \Delta(x) \,dx \le \int_{-\infty}^{\mu} 2c_1 \exp\bigl(-c_0 d (\mu-x)^2\bigr) \,dx.
\]
To evaluate this, we apply the change of variables $t = \mu - x$, which implies $dx = -dt$. The integration limits change from $x \to -\infty$ to $t \to \infty$, and from $x = \mu$ to $t = 0$. Reversing the limits to absorb the negative sign, we have:
\[
\int_{-\infty}^{\mu} 2c_1 \exp\bigl(-c_0 d (\mu-x)^2\bigr) \,dx = \int_{\infty}^{0} 2c_1 \exp(-c_0 d t^2) (-dt) = \int_{0}^{\infty} 2c_1 \exp(-c_0 d t^2) \,dt.
\]
Evaluating this standard Gaussian integral (using the identity $\int_{0}^{\infty} e^{-a t^2}dt = \frac{1}{2}\sqrt{\frac{\pi}{a}}$ with $a = c_0 d$), we obtain:
\[
\int_{0}^{\infty} 2c_1 \exp(-c_0 d t^2) \,dt = 2c_1 \left( \frac{1}{2} \sqrt{\frac{\pi}{c_0 d}} \right) = c_1 \sqrt{\frac{\pi}{c_0 d}}.
\]
Thus, the contribution of the first region is strictly bounded by $c_1 \sqrt{\frac{\pi}{c_0 d}}$.

\medskip
\noindent\textbf{Step 4. Bounding the error on $[z, \mu]$.}
For the second region $[z, \mu]$, we apply the worst-case global bound $\Delta(x) \le 1$. The integral over this region is constrained simply by the geometric width of the interval:
\[
\int_{z}^{\mu} \Delta(x) \,dx \le \int_{z}^{\mu} 1 \,dx = \mu - z.
\]
Substituting our predefined choice of $z$, this contribution evaluates exactly to:
\[
\mu - z = \sqrt{\frac{\log(2 c_1 m)}{c_0 d}}.
\]

\medskip
\noindent\textbf{Conclusion.}
Summing the individual bounds derived in Step 3 and Step 4 yields the total expected quantization error over $[0, \mu]$:
\[
\mathbb E|R-\widehat R| \le c_1 \sqrt{\frac{\pi}{c_0 d}} + \sqrt{\frac{\log(2 c_1 m)}{c_0 d}}.
\]
This completes the proof.
\end{proof}

\section{Experimental Settings}
{Details of the evaluated methods are provided below:}
\begin{itemize}[leftmargin=*, topsep=0pt]
    \item {IVF~\cite{jegouProductQuantizationNearest2011}}: A representative ANN index. The accuracy-efficiency trade-off is controlled by the 
    hyperparameter $n_{probe}$. 
    \item {HNSW~\cite{malkovEfficientRobustApproximate2020}}: A popular graph-based ANN index. The accuracy-efficiency trade-off is controlled by the 
    hyperparameter $ef_{search}$.    
    \item {IVF+PQ}~\cite{DBLP:journals/pami/GeHK014}: This method integrates the representative unbounded quantization technique, product quantization ({PQ}), with the IVF index.   At query time, each query is routed to the $n_{probe}$ nearest clusters, within which the search procedure of {PQ} is applied, as detailed above. The accuracy-efficiency trade-off is controlled by the hyperparameters $n_{cand}$ and $n_{probe}$. 
    \item {IVF+RaBitQ}~\cite{gao2024rabitq}: This method integrates the representative bounded quantization method {RaBitQ} with the IVF index. At query time, each query is routed to the $n_{probe}$ nearest clusters, within which the search procedure of {RaBitQ} is applied, as detailed above. The accuracy-efficiency trade-off is controlled by the hyperparameter $n_{probe}$. 
\end{itemize}
The dataset statistics are shown below.
\begin{itemize}[leftmargin=*, topsep=0pt]

\item {\textsf{WiKi}: The \textsf{WiKi} dataset comprises 10 million corpus sampled from the Wikipedia dataset\footnote{\url{https://huggingface.co/datasets/Cohere/wikipedia-22-12-en-embeddings}}, which serves as the open-source backbone for large language model pre-training. The embeddings are generated using the gte-Qwen2-1.5B-instruct model~\cite{qwen3embedding}. The query set~${Q}$ consists of 1,000 randomly selected passages.}


\item \textsf{C4}: The \textsf{C4} dataset\footnote{\url{https://huggingface.co/datasets/allenai/c4/}} is a large-scale 
open-source corpus designed for natural language pre-training. 
We select 40 JSON files from the training set, which 
contains over 14 million passages, and generate embeddings using the T5 model~\cite{2020t5}. 
The query set~${Q}$ consists of 1,000 randomly selected passages. 



\item \textsf{MSMARCO}: The \textsf{MSMARCO} dataset\footnote{\url{https://huggingface.co/datasets/Snowflake/msmarco-v2.1-snowflake-arctic-embed-m-v1.5}} comprises 18 million passages sampled from the MSMARCO-V2.1 dataset, which is used as the corpora for The TREC 2024 RAG Track
. The embeddings are generated using the Snowflake's Arctic-embed-m-v1.5 model~\cite{snowflake2024embedding}. The query set~${Q}$ consists of 1,000 randomly selected passages.

\item \textsf{Deep100M}: The \textsf{Deep100M} dataset~\cite{simhadri2022results} is the largest benchmark commonly used for ANN evaluation. The embeddings are obtained from an image descriptor dataset, where each embedding is produced by 
the GoogLeNet model~\cite{szegedy2015going}. 
The official 100K query set is used in our experiments.


\end{itemize}
Table~\ref{tab:numcand} reports the optimal $n_{cand}$ values for different datasets and $k$. For each dataset–$k$ combination, $n_{cand}$ is determined to maximize QPS at a recall of 0.95, while ensuring that recall can reach 0.98.

\section{Latency of Top-$k$ Collectors}

Figure~\ref{fig:exp-eff-topk} shows the results on {Deep100M} and similar trends are observed on other datasets. 
The results show that: \noindent\textbf{{{RB}} significantly accelerates top-$k$ collection based on estimated distances compared to {Heap} and {Lazy}.} In particular, on the {Deep100M} dataset, when $k$ = 100{,}000 and $n_{probe}$ = 210, for {IVF+RaBitQ}, {{RB}} buffer requires only {9.2} ms, compared to 36.2 ms for {Heap} and {18.9} ms for {Lazy}, achieving a {2.1}$\times$ speedup.

{We also include the total L1 cache misses during top-$k$ collection in  Table~\ref{tab:cache_1}, measured using Perf profiling. Since other components of the pipeline, such as estimated distance computation, are identical across methods, the observed differences in L1 cache misses mainly stem from the use of different collectors. The experimental results confirm that the acceleration of the top-$k$ collection process is primarily driven by reduced L1 cache misses, with the speedup closely tracking the reduction in miss counts. For example, when $k$ = 100,000, {RB} halves the L1 cache misses compared to {Lazy}, resulting in an 2$\times$ speedup.}

\begin{table}[htbp]
\centering
\renewcommand{\arraystretch}{1.1}
\setlength{\arrayrulewidth}{0.8pt}
\caption{{L1 cache miss counts ($10^5$) during top-$k$ collection process with different collectors under $k$ when $n_{probe} = 210$.}}
\label{tab:cache_1}
\begin{tabularx}{0.45\textwidth}{
    >{\centering\arraybackslash}p{1cm}
    *{5}{>{\centering\arraybackslash}X}
}
\toprule
${k}$ 
  & $5{,}000$ & $10{,}000$ & $20{,}000$ & $40{,}000$ & $100{,}000$ \\
\midrule
{Heap} & 3.2 & 3.7 & 4.8 & 7.2 & 9.7 \\
{d-Heap} &3.1 & 3.5 & 4.3 & 6.6 & 9.3 \\
{Sorted} & 4.5 & 8.9 & 19.9 & 69.3 & 396 \\
{Lazy} &2.9&  3.4 & 4.0 & 5.1 & 8.1 \\
{RB} & 2.7 & 2.9 & 3.1 & 3.4 & 4.1 \\
\bottomrule
\end{tabularx}
\end{table}

\section{Memory Cost}
Table~\ref{tab:qs_memory_cost} reports the memory cost of {{\bbc}} under different $k$ and $m$ settings. The results indicate that the memory cost of {\bbc} is negligible compared to the dataset sizes reported in Table~\ref{tab:qs-dataset} 
since {\bbc} introduces nearly no additional memory usage.

\begin{table}[htbp]
\centering
\renewcommand{\arraystretch}{1.2}
\setlength{\arrayrulewidth}{0.8pt}
\caption{{Memory cost (MB) of \bbc under different $m$ and $k$.}}
\label{tab:qs_memory_cost}
\begin{tabularx}{0.45\textwidth}{
    >{\centering\arraybackslash}p{1.2cm}
    *{5}{>{\centering\arraybackslash}X}
}
\toprule
${m \backslash k}$ 
  & $5{,}000$ & $10{,}000$ & $20{,}000$ & $40{,}000$ & $100{,}000$ \\
\midrule
56 & 2.1 & 4.3 & 8.5 & 17.1 & 34.2 \\ 
120 & 4.6 & 9.2 & 18.3 & 36.6 & 73.2 \\ 
\bottomrule
\end{tabularx}
\end{table}